\newtheorem{theorem}{Theorem}
\newtheorem{lemma}[theorem]{Lemma}
\DeclarePairedDelimiter\ceil{\lceil}{\rceil}
\DeclarePairedDelimiter\floor{\lfloor}{\rfloor}
\newcommand\fig[1]{Fig.\ref{fig:#1}}
\NewDocumentCommand{\evalat}{sO{\big}mm}{%
  \IfBooleanTF{#1}
   {\mleft. #3 \mright|_{#4}}
   {#3#2|_{#4}}%
}
\begin{document}

\preprint{APS/123-QED}

\title{Sample efficient graph classification using binary Gaussian boson sampling}

\author{Amanuel Anteneh}
\email{asa2rc@virginia.edu}
\affiliation{Department of Computer Science, University of Virginia, Charlottesville, Virginia 22903, USA}
%Lines break automatically or can be forced with \\
%\altaffiliation{Current address: 440 West Farmington Road, Virginia Beach, Virginia 23454, USA}
\author{Olivier Pfister}%
\email{olivier.pfister@gmail.com}
\affiliation{Department of Physics, University of Virginia, Charlottesville, Virginia 22903, USA}
%\affiliation{ Department of Computer Science, University of Virginia, Charlottesville, Virginia 22903, USA  }%

\date{\today}% It is always \today, today,
             %  but any date may be explicitly specified

\begin{abstract}
We present a variation of a quantum algorithm for the machine learning task of classification with graph-structured data.  The algorithm implements a feature extraction strategy that is based on Gaussian boson sampling (GBS) a near term model of quantum computing. However, unlike the currently proposed algorithms for this problem, our GBS setup only requires binary (light/no light) detectors, as opposed to photon-number-resolving detectors. Binary detectors are technologically simpler and can operate near room temperature, making our algorithm much less complex and costly to implement physically. We also investigate the connection between graph theory and the Torontonian matrix function which characterizes the probabilities of binary GBS detection events.
\end{abstract}

%\keywords{Suggested keywords}%Use showkeys class option if keyword
                              %display desired
    \maketitle

%\tableofcontents

\section{\label{sec:level1}Introduction}
Graphs are one of the most versatile data structures used in computing, and developing machine learning methods for working with graph-structured data has been a growing sub-field of machine learning research. Graph classification, in particular, has useful applications in fields such as bioinformatics, network science and computer vision as many of the objects studied in these fields can easily be represented as graphs.
However, using graph-structured data with machine learning models is not a straightforward task. This is because one of the most common ways of representing a graph for computational applications, i.e., as an adjacency matrix, cannot be easily used as an input to machine learning classifiers which primarily take vector-valued data as their inputs.
Therefore, a common way of working with graph-structured data is by defining a feature map $\phi$ that maps a graph $G$ to a vector in a Hilbert space called a feature space. From there a function $\kappa$, called a kernel, is defined that measures the similarity of two graphs in the feature space. An example of a feature map from $\mathbb{R}^2 \rightarrow \mathbb{R}^3$ is shown in Fig. \ref{fig:space}.

Kernel methods refer to machine learning algorithms that learn by comparing pairs of data points using this similarity measure. In our context we have a set of graphs $\mathbb{G}$ and we call a kernel $\kappa$ a \textit{graph kernel} if it is a function of the form $\kappa: \mathbb{G} \times \mathbb{G} \rightarrow \mathbb{R}$ \cite{nikolentzos2021graph,kriege2020survey}. The most common example of a kernel function is the feature space's inner product $\kappa(x,x')=\langle \phi(x),\phi(x') \rangle$. The goal of such methods is to construct mappings to feature vectors whose entries (the features) relate to relevant information about the graphs. Using a Gaussian boson sampling (GBS) device to construct graph kernels was an idea first proposed by Schuld et al.\ in Ref.~\citenum{schuld2020measuring}. 

\begin{figure}
\centering
\includegraphics[width=.8\columnwidth]{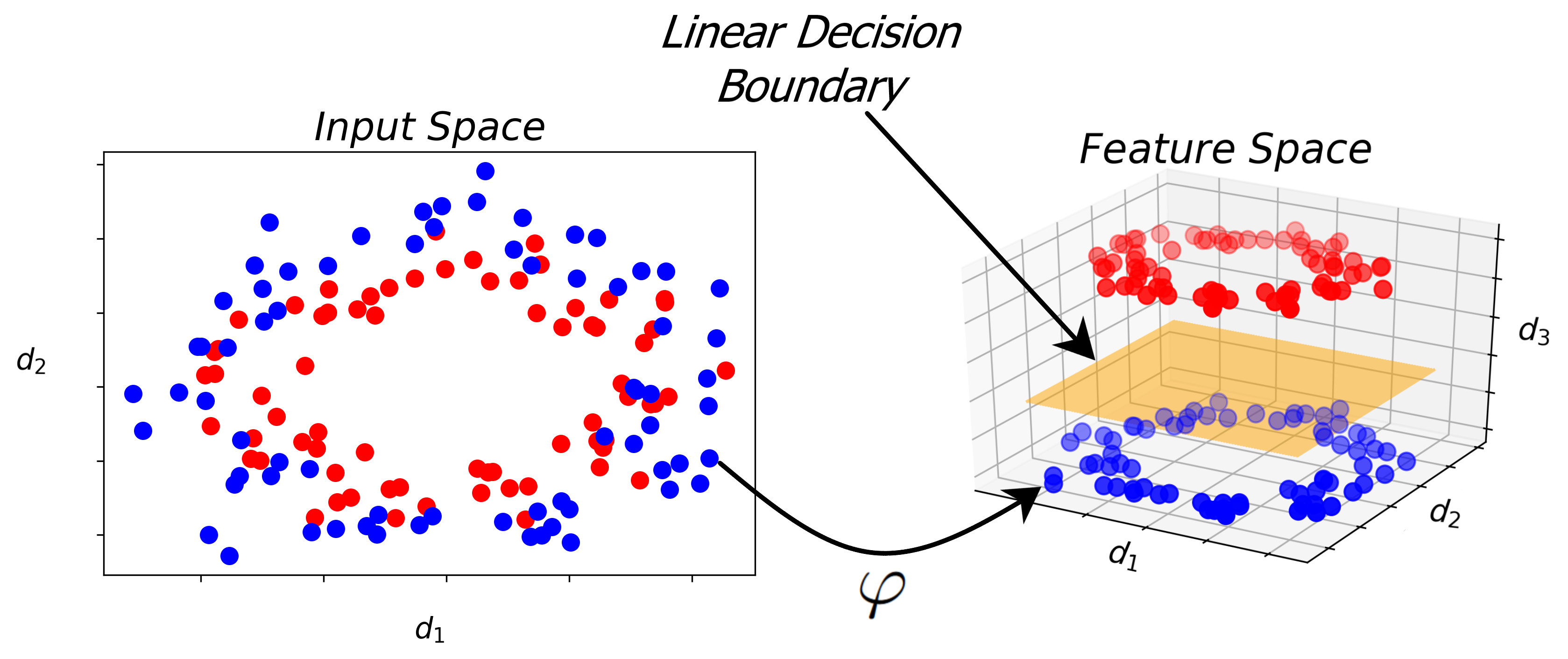}
\caption{In the original input space $\mathbb{R}^2$ the data points, which belong either to the class `red' or `blue', are not separable by a linear function (the decision boundary) but after mapping the points to feature vectors in a higher dimensional space $\mathbb{R}^3$ a linear function is able to separate the two classes. This linear decision boundary can be calculated by supervised machine learning models such as a support vector machine. In our case the input space is the set of all undirected graphs which we denote as $\mathbb{G}$.}
\label{fig:space}
\end{figure}

Boson sampling was first proposed by Aaronson and Arkhipov ~\cite{Aaronson2010} as a task---generating the photon-counting outcomes of the ``quantum Galton board'' constituted by an $M\times M$ optical interferometer fed with single photons into some of its input ports---that is strongly believed to be intractable to classical computers. The reason for this intractability is that calculating the probability distribution for generating random outcomes using Monte Carlo simulations requires calculating the permanent of an $M\times M$ matrix. Calculating the permanent of a general matrix is known to be $\#\P$-complete \cite{valiant1979complexity} which is a class of problems comparable to the class of $\NP$-complete problems in their difficulty. Gaussian boson sampling~\cite{Hamilton2017} is a variant of boson sampling in which the single-photon inputs are replaced with single-mode squeezed states, as produced, for example, by two-photon-emitting optical parametric amplifiers~\cite{Bachor2019}. The GBS probability distribution is governed by the Hafnian of an $M\times M$ matrix. Calculating the Hafnian of a general square matrix can be reduced to the task of calculating permanents therefore calculating the Hafnian is also $\#\P$-complete. In both cases, a quantum machine implementing boson sampling or GBS can easily sample from these hard-to-calculate probability distributions, just because they are ``wired-in,'' and this constitutes the ``quantum advantage'' that was recently demonstrated in optical experiments~\cite{Zhong2020,Madsen2022}. Note also that the initial ``quantum supremacy'' result obtained by Google on a superconducting qubit array~\cite{Arute2019} was a quantum (circuit) sampling result as well.

Beyond these necessary initial steps of demonstrating that quantum hardware can indeed reach regions inaccessible to classical hardware, a subsequent question is that of the utility of a sampling task. Whereas the usefulness of sampling in and of itself is far from established, we know that the histograms produced by statistically significant sampling constitute empirical probability distributions that tend toward the true, classically intractable probability distributions for sample numbers linear in the number of possible outcomes~\cite{Weissman2003}. The problem is that this very number of possible outcomes  grows exponentially with $M$ in a $M$-qubit quantum circuit in general~\footnote{Note that this is not related to the number of possible output quantum states, which scales with the number of parameters governing the quantum evolution, e.g.\ parameters of a simulated Hamiltonian. Obviously, no quantum advantage can be obtained for $M$-qubit Hamiltonians that have $\mathcal O(2^M)$ parameters but all classically intractable $M$-qubit Hamiltonians of physical interest are local and have parameter numbers polynomial in $M$~\cite{Lloyd1996}, which validates Feynman's proposed advantage for quantum simulation~\cite{Feynman1982}. An $M\times M$ optical interferometer has $M^2$ parameters, for example. However, even though any useful quantum computer will explore but a $\mathcal O(M^k)$-dimensional region of an $\mathcal O(2^M)$-dimensional Hilbert space, the number of measurement outcomes will still scale like $\mathcal O(2^M)$ a priori, simply because we do not know the adequate measurement basis that best contains the $\mathcal O(M^k)$ output states. This is the well known exponential overhead of quantum state tomography.}, and exponentially or super-exponentially with $M$ in an $M$-optical-mode boson or Gaussian boson sampler, which dispels any notion of quantum advantage for calculating the corresponding quantum probability distributions.

One direction that has been explored out of this conundrum is the binning of GBS measurements results into outcome classes whose cardinality scales favorably (e.g.\ polynomially) with the problem size (the GBS mode number). The immediate downside of such an approach is loss of information it entails, which impacts usefulness. However, graph classification using feature vectors and coarse-graining might provide advantageous GBS applications. This was first pointed out by Schuld et al.~\cite{schuld2020measuring}. 

In this paper, we show that a technologically simpler version of GBS, which we term binary GBS, can achieve comparable or better performance. The paper is structured as follows. In Sec.\ref{sec:level1} we give broad reminders about GBS and graph theory (with details in Appendix~\ref{app0}) and %the GBS device, it's connection to graph theory, the notion of sample complexity for applications of GBS, and, lastly, 
the current GBS graph kernel from Ref.~\citenum{schuld2020measuring}. We then present our graph kernel in Sec.\ref{sec:level2} along with results from numerical experiments and analyses of its complexity, features and advantages.

\begin{figure}
\centering
\includegraphics[scale=0.21]{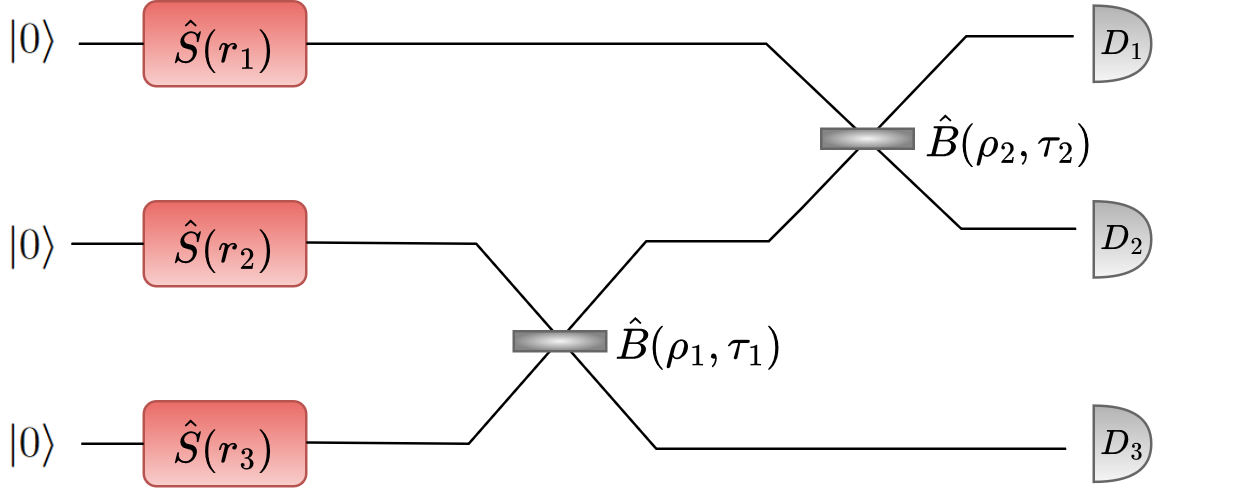}
\caption{Example of a 3-mode Gaussian boson sampler. Mode $i\in\{1,2,3\}$ starts in the vacuum state $\ket{0}$, is then squeezed by $\hat{S}(r_i)$ and passes through the network of two beamsplitters (the interferometer) before the number of photons in each mode is measured by the detectors $D_{i\in\{1,2,3\}}$.}
\label{fig:gbs}
\end{figure}

\section{\label{sec:level1} Reminders about Gaussian Boson Sampling (GBS) and graph theory}

\subsection{\label{sec:level2} Gaussian Boson Sampling}

As mentioned above, an $M$-mode GBS devise comprises $M$ single-mode-squeezing (SMS) inputs, an $M\times M$ optical interferometer, and $M$ photon-number-resolving (PNR) detectors, see \fig{gbs} for an example. The latter have come of age in superconducting devices such as transition edge sensors~\cite{Lita2008} and superconducting nanowire single-photon detectors~\cite{Cahall2017}. Both the former and the latter have recently been used to make PNR measurements of as many as 100 photons~\cite{Eaton2022_100,Cheng2022_100}. 

An $M$-mode Gaussian boson sampler prepares a Gaussian (Wigner function) quantum state by the $M$ squeezers and the interferometer. 
The squeezers output squeezed light into the interferometer and the photons are then passed through the interferometer after which the $M$ detectors detect what modes the photons end up in resulting in a detection event. We denote a detection event as $\textbf{n} = (n_1,...,n_M)$, where $n_i$ is the photon count in the $i$th mode and the total number of photons is $n=\sum_{i=1}^{M}n_i$. 

We now consider binary detectors, such as single-photon avalanche photodiodes, which are single-photon sensitive but aren't PNR and give the same signal however many photons were absorbed. In this case, we have $n_i \in \{0,1\}$ where $n_i=0$ indicated zero photons were detected in that mode and $n_i=1$ indicates that at least one photon was detected. When using binary detectors we no longer know the total photon number $n$ so we use $N$ to denote the number of detectors that detect photons leading to $\sum_{i=1}^{M}n_i = N \leq M$. 

An $M$-mode Gaussian state is fully described by a covariance matrix $\boldsymbol{\Sigma} \in \mathbb{R}^{2M \times 2M}$ and a displacement vector $\textbf{d} \in \mathbb{R}^{2M}$ \cite{weedbrook2012gaussian}.

\subsection{\label{sec:level2} Graph theory}

In this paper we define a graph $G=(V,E)$ as a set of vertices $V=\{v_1,v_2,...\}$ and a set of edges $E=\{(v_1,v_1),(v_1,v_2),... (v_i,v_j), ... \}$ that connect vertices if the edge value is not zero. A graph can be unweighted, with all nonzero edge weights equal to 1, or weighted, for example with real edge weights in GBS. For undirected graphs, which is what we will exclusively work with in this paper, $(v_i,v_j) = (v_j,v_i)$, $\forall i,j$. The size of a graph is equal to the cardinality $|V|$ of its vertex set.  %an edge weight $w_i \in \mathbb{R}$ associated with them. We will define a graph as unweighted if all edges have the same edge weight $w_i = 1$ $\forall i$ and weighted otherwise. 
%Graphs can be directed meaning the edges can point from one vertex to another. For example for a directed graph we can have the edge $v_1v_2$ be an arrow pointing from $v_1$ to $v_2$ but not from $v_2$ to $v_1$ in which case $v_1v_2 \neq v_2v_1$. 
The degree of a vertex $v$ is the number of edges that are connected to it. The maximum degree of a graph is the largest degree of a vertex in its vertex set.

Graphs can be represented in a number of ways such as a diagram, Fig.\ref{fig:sub1}, or a more computationally useful way as an adjacency matrix, Fig.\ref{fig:sub2}. 
The adjacency matrix of an undirected graph $G$ with $|V|$ vertices is a $|V| \times |V|$ symmetric matrix $A$ with entries $a_{ij}$ where $a_{ij}$ is the weight of the edge connecting vertices $i$ and $j$. %An example of a graph with 4 vertices and 4 edges and its adjacency matrix is shown in Fig.\ref{fig:adjMat}.
\begin{figure}
\centering
\begin{subfigure}{.25\textwidth}
  \centering
  \includegraphics[scale=0.5]{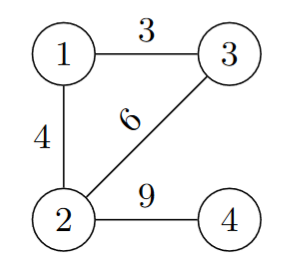}
  \caption{Undirected weighted 4-vertex graph with 4 edges}
  \label{fig:sub1}
\end{subfigure}%
\begin{subfigure}{.25\textwidth}
  \centering
 \includegraphics[scale=0.6]{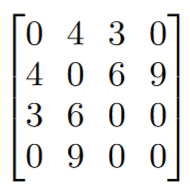}
  \caption{Adjacency matrix of graph}
  \label{fig:sub2}
\end{subfigure}
\caption{4-vertex graph and its corresponding adjacency matrix}
\label{fig:adjMat}
\end{figure}

%\textcolor{ForestGreen}
{A substantial amount of work has been done on the connection between graph theory and Gaussian boson sampling with PNR detectors~\cite{bradler2018gaussian, bradler2021graph, arrazola2018using}. In Appendix~\ref{app0}, we recall some details of this work, namely how a given graph adjacency matrix can be encoded in a GBS experiment.}

\subsection{\label{sec:level3} Sample complexity of GBS}

\subsubsection{The problem with using GBS beyond sampling}

The sample complexity of a machine learning algorithm refers to the number of samples or amount of data required to learn some target function. In the case of GBS applications it refers to the number of samples we need to generate from the GBS device to learn or approximate a probability distribution over some set of the photon detection events.
This complexity type is extremely important to examine for any applications of GBS as it could potentially render certain applications of GBS intractable for larger problem sizes. 

For example it was shown that the GBS device utilizing PNR detectors can encode the graph isomorphism problem \cite{bradler2021graph}. 
This is done by encoding two graphs into two GBS devices and sampling each $S$ times. The $S$ samples could then be used, in principle, to reconstruct the probability distribution over all possible detection events $\textbf{n}$ for a given $M$ and $n$. However, this cannot be done efficiently enough to provide a quantum advantage. Indeed, we know from Refs.~\citenum{Weissman2003, canonne2020short} that reconstructing a probability distribution $D$ over a discrete finite set $\Omega$ of cardinality $|\Omega|$ from an empirical distribution $\hat{D}$ constructed from samples from $D$ we require
\begin{equation}\label{sampleEq}
S = \ceil*{ \frac{2(\ln(2)|\Omega| + \ln(\frac{1}{\delta}))}{\epsilon^2} }
\end{equation}
samples to guarantee that
\begin{equation}
    p(||D-\hat{D}||_1 \geq \epsilon) \leq \delta 
\end{equation}
where $||D-\hat{D}||_1$ denotes the $L_1$ distance between $D$ and $\hat{D}$.
In other words we require 
\begin{equation}
    \mathcal{O}\left( \frac{|\Omega|+\ln(\frac{1}{\delta})}{\epsilon^2}\right)
\end{equation} samples to ensure with probability at most $\delta$ that the sum of the absolute values of the errors on the empirical probability distribution is $\epsilon$ or greater.

This means the number of samples we need to approximate a probability distribution scales linearly with the number of elements in it's sample space i.e. the number of outcomes. In the case where $D$ is the probability distribution over the set of all possible PNR detection events the number of such events for a given number of modes $M$ and maximum number of photons $n$ is 
\begin{equation}
    |\Omega| = {n + M - 1 \choose M-1} = \frac{(n+M-1)!}{n!(M-1)!}
\end{equation}
which in number theory is also known as the formula for the number of weak compositions of an integer $n$ into $M$ parts. 
As shown in appendices \ref{app1} and \ref{app2} under the assumption that the number of modes scales quadratically with the number of photons, $M \in \mathcal{O} (n^2)$, this quantity grows super exponentially with $M$ and in general scales as $\mathcal{O}((n+M-1)^{M-1})$ meaning that as the size of the graphs increase, and therefore as the number of modes of the GBS device increase, we require an exponential number of samples to ensure the algorithm can give us the correct result within a certain probability. Therefore while the algorithm may in principle be able to decide graph isomorphism, it is sample inefficient to an exponential degree making it intractable to implement even with a fault tolerant quantum computer.

\subsubsection{Coarse graining of sample distributions}

However a method was suggested in \cite{bradler2021graph} to coarse-grain the probability distribution by combining outcomes into groups called orbits. Coarse-graining in this sense means to construct a new probability distribution over the set of these groups, the cardinality of which is less than the original set of all possible detection events. An orbit $\textrm{O}_{\textbf{n}}$ consists of a detection event $\textbf{n}$ and all of its permutations. For example the orbit that contains the detection event $\textbf{n} = (1,2,2)$ also contains the detection events $(2,1,2)$ and $(2,2,1)$. The number of orbits for a 4-mode GBS device is equal to the number of ways one can write $n_1+n_2+n_3+n_4=n$, where the order of the summands does not matter. This is called the number of partitions of the integer $n$ into $M$ parts and from the number theory literature \cite{orucc2016number} it is known to behave asymptotically as 
\begin{equation}
     |\Omega| \approx \frac{e^{\pi \sqrt{\frac{2(n-M)}{3}}}}{4\sqrt{3}(n-M)}, M \leq n \leq 2M.
\end{equation}
If we assume the number of photons grows linearly with the number of modes, $n \in \Theta(M) \rightarrow n = 2M$, we have the following asymptotic bound on the number of orbits
\begin{equation}
    \frac{1}{4\sqrt{3}M}e^{\pi \sqrt{\frac{2M}{3}}}  \in \mathcal{O}(\frac{e^{\pi \sqrt{\frac{2M}{3}}}}{M}).
\end{equation}
This means the number of orbits, which is now the number of outcomes $|\Omega|$ from Eq. \ref{sampleEq}, grows like $M^{-1}e^{\pi\sqrt{2M/3}}$ meaning we would have a sample complexity of 
\begin{equation}
\mathcal{O}\left( \frac{M^{-1}e^{\pi\sqrt{2M/3}}+\ln(\frac{1}{\delta})}{\epsilon^2}\right)    
\end{equation} which is subexponential but still intractable for large $M$.

\subsubsection{\label{sec:level3} Sample complexity of previously proposed GBS graph kernels}

The first GBS based graph kernel proposed in \cite{schuld2020measuring} maps a graph $G$ to feature vectors in a feature space $\phi: G \rightarrow \textbf{f} = (f_1,f_2,...,f_D) \in \mathbb{R}^D$. Where $f_i = p(\textrm{O}_\textbf{n}^i)$ is the probability of detecting a detection event from the orbit $\textrm{O}_{\textbf{n}}^i$. This kernel was shown to perform well against three of the four classical kernels we use as benchmarks in this paper. 
However a shortcoming of this method is that the sample complexity is $\mathcal{O}(M^{-1}e^{\pi\sqrt{2M/3}})$.

The second GBS kernel is of the form $\phi: G \rightarrow \textbf{f} = (f_1,f_2,...,f_D) \in \mathbb{R}^D$, with $f_i = p(\mathcal{M}_{n, \Delta_s}^i)$ where $p(\mathcal{M}_{n, \Delta_s}^i)$ is the probability of detecting a detection event that belongs to the ``meta-orbit'' $\mathcal{M}_{n, \Delta_s}^i$. 
A meta-orbit $\mathcal{M}_{n, \Delta_s}$ is uniquely defined by a total photon number $n$ and $\Delta_s$ 
which is defined as
\begin{equation}
    \Delta_s = \{\textbf{n} \textrm{ : } \sum_{i}n_i = n \; \land \; \forall i \textrm{ : } n_i \leq  s \; \}.
\end{equation} 
Therefore a meta-orbit consists of all detection events where total photon number is equal to $n$, where no detector counts more than $s$ photons. It is claimed that this strategy partitions the set of all PNR detection events into a polynomial number of subsets in $n$ \cite{bradler2019duality}.
\section{\label{sec:level2}The Algorithm}

\subsection{\label{GBS-Threshold-Theory} GBS with binary detectors and its relation to graph theory}

While the relationship between GBS with PNR detectors and graph theory has been thoroughly explored, there has been little exploration of how GBS with binary detectors fits into the picture. In this section we shed some light on the relationship between the two. As stated before when using binary detectors the detection outcomes are of the form $\textbf{n}_{\textrm{bin}}=(n_1,...,n_M)$ where $n_i \in \{0,1\} $ $\forall i$ and $n_i=1$ indicates the $i$th detector detected one or more photons. The probability of detecting a detection outcome with binary detectors is characterized by a matrix function called the Torontonian, to which the same arguments for classical intractability as for the Hafnian can be extended~\cite{quesada2018gaussian}. The probability of a given binary detection event $\textbf{n}_{\textrm{bin}}$ is given by 
\begin{equation}
    p(\textbf{n}_{\textrm{bin}}) = \frac{\textrm{Tor}(O_{\textbf{n}_{\textrm{bin}}})}{\sqrt{\textrm{det}(Q)}} = \frac{\textrm{Tor}(X\tilde{A}_{\textbf{n}_{\textrm{bin}}})}{\sqrt{\textrm{det}(Q)}}
\end{equation}
% I use \mathbb{I}_{2n} instead of \mathbb{I}_{2M} because the dimension of the matrix (Q^{-1})_{\textbf{n}} is determined by n not M
%\textcolor{ForestGreen}
{
where 
\begin{align}
    \tilde{A}&=(A \oplus A), \\
    X &=  \begin{bmatrix}
    0 & \mathbb{I}\\
   \mathbb{I} & 0 
    \end{bmatrix}, \\
    Q& = (\mathbb{I}_{2M} - X\tilde{A})^{-1}, \\
    O& = \mathbb{I}-Q^{-1}
\end{align}
}
and Tor() is the Torontonian of a $2N\times 2N$ matrix $A$ defined as 
\begin{equation}
        \textrm{Tor}(A) = \sum_{Z \in P([N])} (-1)^{|Z|} \frac{1}{\sqrt{\textrm{det}(\mathbb{I}-A_Z)}}.
\end{equation}
Where $P([N])$ is the power set, the set of all possible subsets, of the set $[N]=\{1,2,...,N\}$.
The probability of a PNR detection event $\textbf{n}$ can be written in terms of the matrix $O$ as 
\begin{equation}
    p(\textbf{n}) =  \frac{1}{\sqrt{\textrm{det}(Q)}}\frac{\textrm{Haf}(\tilde{A}_{\textbf{n}})}{\textbf{n}!}  = \frac{1}{\sqrt{\textrm{det}(Q)}}\frac{\textrm{Haf}(XO_{\textbf{n}})}{\textbf{n}!}.
\end{equation}

The probability of a binary GBS detection event is simply the sum of all probabilities of the corresponding PNR detection events. 
A useful example to illustrate this is a 4-mode Gaussian boson sampler programmed according to some adjacency matrix $A$ of a graph $G$. Suppose we use binary detectors and measure the detection event $\textbf{n}_\textrm{bin} = (1,0,1,0)$. The corresponding detection events when using PNR detectors would be of the form $\textbf{n} = (n_1,0,n_3,0)$ where $n_1,n_3 > 0$  and $n_1+n_3$ is even. We will define $\mathcal{N}$ to be the set of all possible $4$-mode PNR detection events with 0's in the 2nd and 4th index, i.e. only the 2nd and 4th detectors detect no photons. From this we have 
\begin{equation}
\begin{split}
    p((1,0,1,0)) = \frac{\textrm{Tor}(X\tilde{A}_{(1,0,1,0)})}{\sqrt{\textrm{det}(Q)}}  \\ = \sum_{\textbf{n} \in \mathcal{N}}p(\textbf{n})  = \sum_{\textbf{n} \in \mathcal{N}} \frac{\textrm{Haf}^2(A_{\textbf{n}})}{\textbf{n}!\sqrt{\textrm{det}(Q)}}.
\end{split}
\end{equation}
This means the Torontonian of $X\tilde{A}$ is proportional to an infinite sum of Hafnians as there are an infinite number of integer lists of the form $(n_1,0,n_3,0)$ where $n_1,n_3 > 0$. In a real GBS experiment, however, the energy is finite and therefore the measured probabilities of these events would be equal to a finite version of this sum where all detection events with total photon number greater than some cutoff photon number vanish from the series.

In terms of graph theory this means the probability of detecting $\textbf{n}_\textrm{bin}= (1,0,1,0)$ is proportional to the sum of the squared Hafnians of all possible subgraphs of $G$ of unbounded and even size with their 2nd and 4th vertices removed. But again in practice the maximum size of the subgraphs will always be bounded by some maximum photon number for a real GBS experiment. More generally we have 
\begin{equation}
    p(\textbf{n}_\textrm{bin}) = \frac{\textrm{Tor}(O_{\textbf{n}_{\textrm{bin}}})}{\sqrt{\textrm{det}(Q)}} = \frac{\textrm{Tor}(X\tilde{A}_{\textbf{n}_{\textrm{bin}}})}{\sqrt{\textrm{det}(Q)}} = \sum_{\textbf{n} \in \mathcal{N}} \frac{\textrm{Haf}^2(A_{\textbf{n}})}{\textbf{n}!\sqrt{\textrm{det}(Q)}}
\end{equation}
where $\mathcal{N}$ is the set of all PNR events that correspond to the binary detection event $\textbf{n}_\textrm{bin}$ \footnote{The Torontonian was also shown to be a generating function for the Hafnian given by \begin{equation}
    \textrm{Haf}(A) = \frac{1}{M!}\frac{d^M}{dz^M}
    \evalat{\textrm{Tor}(zXA)}{z=0}
\end{equation}
where $A$ is a $2M\times 2M$ matrix.
}.

\subsection{Constructing the feature vectors}
Once the GBS device is programmed we generate $S$ samples from the device. For our algorithm we use binary detectors so each sample is a list of length $M$ with entries either 0 or 1. Once we have these samples we use them to construct the feature vector of which we have two definitions based on two coarse-graining strategies. 

The first is based on what we call the $\mu$ coarse-graining strategy where we group together detection events that contain exactly $i$ detector `clicks' or ones. For example the detection events $(1,0,0)$ and $(0,0,1)$ would be grouped together since they both contain exactly 1 detector click. These groups can also be thought of as `binary orbits' since they contain a detection event and all its permutations. This strategy partitions the set of all binary detection events into a linear number of disjoint subsets in $N$. Using this strategy we can define the feature map as $\phi:G\rightarrow\textbf{f} = (f_0,f_1,...,f_{N}) \in \mathbb{R}^{N}$. Where $N$ is the maximum number of detector clicks and $f_i=\frac{S_i}{S}$ with $S_i$ being the number of samples which contain exactly $i$ ones. Equivalently this is the probability of detecting an event where exactly $i$ detectors detect a photon. 

The second feature map is based on what we call the $\nu$ coarse-graining strategy. For a 5 mode boson sampler utilizing binary detectors with maximum click number 5 there are $|\Omega|=32$ possible detection outcomes. This coarse-graining strategy groups together detection events whose first 5 modes are one of these 32 outcomes. For example the detection event $\textbf{n}_{\textrm{bin}}=(0,1,0,0,1,0,1)$ belongs in the group associated with the detection event
$(0,1,0,0,1)$ since they are equal if one is only concerned with the first 5 modes. This strategy partitions the set of all detection events of $5$ or more modes into a constant number of subsets, i.e. 32. The feature map based on this strategy is defined as $\phi:G\rightarrow\textbf{f} = (f_{[0,0,0,0,0]},f_{[1,0,0,0,0]},...,f_{\textbf{n}}) \in \mathbb{R}^{32}$. Where $f_{\textbf{n}}$ is the probability of detecting an event where the first 5 modes correspond to one of the 32 possible detection outcomes. For example $f_{[1,0,0,0,0]}$ is the probability that the first detector detects photons and the following 4 detectors detect vacuum.

Once we construct the feature vector for each graph in the data set we input them to a machine learning classifier such as a support vector machine.

\begin{table*}
\caption{\label{tab:table1} Graph data set statistics after prepossessing. A more detailed description of these data sets can be
found in appendix B of Ref.~\citenum{schuld2020measuring}. }

\begin{tabular}{|c|c|c|c|c|}
\hline
 Data set &  \# of  graphs &  \# of classes &  avg. \# of vertices &  avg. \# of edges\\ \hline
 
AIDS & $1723$ & $2$ & $11.11$ & $11.29$  \\

BZR\_MD & $257$ & $2$ & $20.10$ & $197.69$ \\

COX2\_MD & $118$ & $2$ & $23.90$ & $274.40$  \\

ENZYMES & $204$ & $6$ & $18.56$ & $36.30$ \\

ER\_MD & $357$ & $2$ & $19.27$ & $185.15$ \\

FINGERPRINT & $1080$ & $3$ & $10.58$ & $9.10$ \\

IMDB-BINARY & $806$ & $2$ & $15.98$ & $63.32$ \\

MUTAG & $179$ & $2$ & $17.48$ & $19.23$ \\

NCI1 & $1853$ & $2$ & $19.77$ & $21.27$  \\

PROTEINS & $515$ & $2$ & $15.77$ & $29.37$ \\

PTC\_FM & $284$ & $2$ & $13.64$ & $13.99$ \\
\hline
\end{tabular}

\end{table*}

\subsection{\label{sec:level5} Complexity analysis}
In this section we discuss, in addition to the time and space complexity, the sample complexity of our algorithm.
\subsubsection{Sample Complexity}
Since the $n_i$'s for binary detection events can be either $0$ or $1$ we can think of the detection outcomes as binary strings of length $M$ with at most $M$ ones. 
The number of binary strings of length $M$ with exactly $i$ ones is ${M \choose i}$. So the number of possible binary detection events, the number of binary strings of length $M$ with at most $M$ ones, is given by 
\begin{equation}
  |\Omega| = \sum_{i=0}^{M} {M \choose i }.  
\end{equation}
We can show this function grows like $2^M$ using the binomial expansion
\begin{equation}
    2^M = (1+1)^M = \sum_{i=0}^{M} {M \choose i} 1^{M-i} 1^{i} = \sum_{i=0}^{M} {M \choose i}.
\end{equation}
Therefore we could not simply use the probability of the individual detection events as features without coarse-graining even when using binary detectors as we would still need a prohibitively large number of samples to approximate their probabilities to within a constant error. This was the reason for introducing the $\nu$ and $\mu$ coarse-graining strategies. 

Since the number of outcomes of the $\mu$ distribution scales linearly with $N$ which is $\leq M$ the sample complexity of approximating the $\mu$ coarse-grained probability distribution is 
\begin{equation}
\mathcal{O}\left( \frac{M+\ln(\frac{1}{\delta})}{\epsilon^2}\right)    
\end{equation}
which reduces to $\mathcal{O}(M)$ for constant $\epsilon$ and $\delta$.
The sample complexity of approximating the $\nu$ coarse-grained probability distribution is 
\begin{equation}
\mathcal{O}\left( \frac{32+\ln(\frac{1}{\delta})}{\epsilon^2}\right)    
\end{equation}
which reduces to $\mathcal{O}(1)$ for constant $\epsilon$ and $\delta$.
\subsubsection{Space Complexity}
The size of the $\nu$ feature vectors is constant with respect to the graph size so the space required is $\mathcal{O}(1)$ and for the $\mu$ feature vectors the size grows linearly with $N$ which is $\leq M$ so the space required is $\mathcal{O}(M)$.  However storing the adjacency matrix of the graphs requires $\mathcal{O}(M^2)$ space complexity. 
\subsubsection{Time Complexity}
The time complexity is determined by the most computationally time intensive step of the algorithm which is encoding the adjacency matrix into the GBS device. This is the case because the encoding process requires taking the Takagi decomposition of the matrix $A$ which for a $M \times M$ matrix has time complexity $\mathcal{O}(M^3)$ as it is a special case of the singular value decomposition
\cite{hahn2006routines}. However there do exist quantum algorithms for computing the singular value decomposition of a matrix with complexity that is polylogarithmic in the size of the matrix \cite{gu2019quantum}. In particular the quantum singular value estimation algorithm for a $m\times n$ matrix presented in \cite{kerenidis2016quantum} has complexity $\mathcal{O}(\textrm{polylog}(mn)/\epsilon)$ where $\epsilon$ is an additive error.

%\textcolor{ForestGreen}
{
\section{\label{sec:level4} Numerical experiments}}
%\textcolor{ForestGreen}
{
\subsection{Implementation details}}
We used \texttt{The Walrus} python library to classically sample from the GBS output distribution when running our experiments and the \texttt{GraKel} python library to fetch the data sets and simulate the classical graph kernels \cite{gupt2019walrus, siglidis2020grakel}. Classically sampling from a GBS output distribution is very time intensive even when using binary detectors so we choose to follow the choice made in \cite{schuld2020measuring} and discard graphs with greater than 25 and less than 6 vertices for each data set. Before sampling from the GBS device we have four parameters we can set: the maximum number of detector clicks allowed $N$, the average photon number $\bar{n}$, the displacement on each mode of the GBS device $d$ and lastly the number of samples generated by the GBS device $S$. 
We set $N=6, \bar{n}=5$ and $d=0$ for our results reported here leading to probability distribution of 32 outcomes using the $\nu$ coarse-graining strategy and $7$ outcomes using the $\mu$ coarse-graining strategy. Using Eq. \ref{sampleEq} with $\delta=0.01$ and $\epsilon=0.06$ we require about $S=15000$ samples for the $\nu$ feature vectors and about $S=6000$ samples for the $\mu$ feature vectors.

For the machine learning classifier we use a support vector machine with an RBF kernel $\kappa_{\textrm{rbf}}$. We obtain the accuracies in Table \ref{tab:table2} by running a double 10-fold cross-validation 10 times. The inner fold performs a grid search through the discrete set of values $[10^{-4}, 10^{-3},...,10^{2},10^{3}]$ on the $C$ hyper-parameter of the SVM which controls the penalty on misclassifications. 

%\textcolor{ForestGreen}
{
\subsection{Numerical results from GBS simulation and subsequent classification}}

We tested our graph kernel on the same data sets used in \cite{schuld2020measuring}. We also ignored vertex labels, vertex attributes and edge attributes and converted all adjacency matrices to be unweighted.

Four classical graph kernels were used as a benchmark for our algorithms classification accuracy. The subgraph matching kernel (SM) with time complexity $\mathcal O(kM^{k+1})$ where $M$ is the number of vertices and $k$ the size of the subgraphs being considered \cite{kriege2012subgraph}, the graphlet sampling kernel (GS) with worst case time complexity $\mathcal O(M^k)$ which can be optimized to $\mathcal{O}(Md^{k-1})$ for graphs of bounded degree with the restriction that $k \in \{3,4,5\}$, where $k$ is the graphlet size and $d$ is the maximum degree of the graph \cite{shervashidze2009efficient}, the random walk kernel (RW) with time complexity $\mathcal O(M^3)$ \cite{vishwanathan2010graph} and the shortest path kernel (SP) with time complexity $\mathcal{O}(M^4)$ \cite{borgwardt2005shortest}. For the graphlet sampling kernel we set maximum graphlet size to $k=5$ and draw $5174$ samples, for the random walk kernel we use fast computation and a geometric kernel type with the decay factor set to $\lambda=10^{-3}$, for the subgraph matching kernel we set maximum subgraph size to $k=5$ and for the shortest path kernel we used the Floyd–Warshall algorithm to calculate shortest paths. 
%\textcolor{ForestGreen}
{The accuracies of all four classical kernels and our kernel are shown in the top portion of Table \ref{tab:table2}. The values in the bottom portion of Table \ref{tab:table2} are the accuracies of the original GBS graph kernels and are taken from \cite{schuld2020measuring} where the features vectors were constructed with $n=6$. Some accuracies for the subgraph matching kernel are not reported due to it's $\mathcal{O}(M^6)$ time complexity for $k=5$ which required longer than 7 days of computation time for data sets with a high average number of vertices.}

%\textcolor{ForestGreen}
{
We can see from the table that our kernel is very competitive with both the classical and PNR based GBS graph kernels and in fact achieves the highest accuracy on the ENZYMES dataset. The PNR based kernel only obtains a test accuracy significantly higher than random guessing ($\approx 16\%$) for ENZYMES only when displacement is applied. Our kernel can be seen as even more feasible in this regard since we do not require the extra operation of displacement to reach our level of accuracy. From Fig. \ref{fig:size} we see that for some datasets such as AIDS and MUTAG there is a strong size imbalance amongst graphs of different classes. For the AIDS dataset graphs belonging to class 1 are of much smaller size than those in class 0 while for MUTAG the converse is true. This size imbalance also exists to a lesser extent for the ER\_MD, BZR\_MD and PTC\_FM datasets and for the FINGERPRINT dataset class 1 has a graph size distribution significantly different from the other two. Both our GBS kernel and the PNR based kernel perform well on these size imbalanced datasets which indicates that graph size is a property that GBS based kernels are sensitive to and that this sensitivity persists even when binary detectors are used.
}

\begin{table*}
\begin{ruledtabular}
\caption{\label{tab:table2}
%\textcolor{ForestGreen}
{
Average test accuracies of the support vector machine with different data sets and graph kernels.  The values in bold in the upper/lower section of the table are the best accuracy obtained for that section. The values in parenthesis are the standard deviation across the 10 repeats of double cross validation. GS, RW, SM and SP refer to the graphlet sampling, random walk, subgraph matching and shortest path kernels respectively. $\protect \textrm{GBS}^\textrm{bin}_{\nu}$ and $\protect \textrm{GBS}^\textrm{bin}_{\mu}$ denotes our GBS kernel with binary detectors that use the $\nu$ and $\mu$ coarse-graining strategies to construct the feature vectors respectively. $\protect \textrm{GBS}^\textrm{bin+}_{\nu}$ denotes that the feature associated with detecting vacuum $[0,0,0,0,0]$ in the first 5 modes was dropped from all feature vectors. $\protect \textrm{GBS}^{\textrm{PNR}}$ and $\protect \textrm{GBS}^{\textrm{PNR+}}$ refer to the original GBS kernels with PNR detectors that use orbit and meta-orbit probabilities as features respectively with a displacement of $d$ on each mode. 
*Runtime $>$ 7 days}
}
\begin{tabular}{cccccccccccc}
 Data set & $\rm GBS^\textrm{bin+}_{\nu}$ & $\rm GBS^\textrm{bin}_{\nu}$ & $\rm GBS^\textrm{bin}_{\mu}$ & GS & RW & SM & SP\\ \hline
 
AIDS & $98.47(0.10)$ & $98.74(0.20)$ & $\textbf{99.53}(0.05)$ & $99.30(0.07)$ & $53.11(11.90)$ & $77.85(2.44)$ & $99.34(0.09)$ \\

BZR\_MD & $60.14(1.28)$ & $61.73(0.89)$ & $58.79(1.17)$ & $51.42(3.51)$ & $\textbf{64.54}(0.36)$ & time out* & $50.82(1.76)$ \\

COX2\_MD & $\textbf{51.62}(2.76)$ & $50.18(2.96)$ & $51.30(3.86)$ & $49.01(3.18)$ & $48.98(4.78)$ & time out* & $48.11(4.30)$\\

ENZYMES & $\textbf{48.10}(1.18)$ & $41.75(2.35)$ & $19.83(1.43)$ & $34.59(2.54)$ & $19.50(2.29)$ & $37.38(1.60)$ & $22.15(1.88)$\\

ER\_MD & $67.74(0.94)$ & $69.19(0.33)$ & $68.84(0.50)$ & $48.88(4.53)$ & $\textbf{70.32}(0.02)$ & time out* & $45.23(4.35)$\\

FINGERPRINT & $64.45(0.78)$ & $\textbf{65.53}(0.86)$ & $63.56(0.67)$ & $65.25(1.30)$ & $33.63(3.57)$ & $46.89(0.56)$ & $46.22(1.02)$\\

IMDB-BINARY & $60.69(0.84)$ & $61.35(0.98)$ & $67.34(0.38)$ & $\textbf{68.49}(0.63)$ & $67.78(0.38)$ & time out* & $65.50(0.27)$\\

MUTAG & $84.63(0.91)$ & $\textbf{85.94} (0.98)$ & $81.37(0.90)$ & $80.80(0.91)$ & $83.22(0.04)$ & $83.24(1.27)$ & $82.74(1.65)$\\

NCI1 & $\textbf{63.45}(0.57)$ & $56.99(1.69)$ & $59.09(1.02)$ & $50.34(3.22)$ & $50.96(3.58)$ & time out* & $53.40(2.25)$ \\

PROTEINS & $\textbf{65.95}(1.03)$ & $63.38(0.73)$ & $63.11(0.55)$ & $65.75(0.94)$ & $56.91(1.39)$ & $62.93(0.83)$ & $63.63(0.41)$\\

PTC\_FM & $52.63(3.95)$ & $57.47(2.72)$ & $59.17(1.58)$ & $\textbf{60.74}(1.48)$ & $50.95(3.68)$ & $56.36(2.66)$ & $55.38(4.04)$\\
\end{tabular}

\begin{tabular}{ccccc} 
     Data set & $\textrm{GBS}^{\textrm{PNR}}$ ($d=0$) & $\textrm{GBS}^{\textrm{PNR}}$ ($d=0.25$) & $\textrm{GBS}^{\textrm{PNR+}}$ ($d=0$) & $\textrm{GBS}^{\textrm{PNR+}}$ ($d=0.25$) \\  \hline
     AIDS & $99.60 (0.05)$ & $\textbf{99.62} (0.03)$ & $99.58(0.06)$ & $99.61 (0.05)$  \\
     BZR\_MD & 62.73(0.71) & 62.13(1.44) & 62.01(1.43) & \textbf{63.16}(2.11) \\
     COX2\_MD & 44.98(1.80) & 50.11(0.97) & 57.84(4.04) & \textbf{57.89}(2.62) \\
     ENZYMES & 22.29(1.60) & 28.01(1.83) & 25.72(2.60) & \textbf{40.42}(2.02) \\
     ER\_MD & 70.36(0.78) & 70.41(0.47) & 71.01(1.26) & \textbf{71.05}(0.83) \\
     FINGERPRINT & 65.42(0.49) & \textbf{65.85}(0.36) & 66.19(0.84) & 66.26(4.29) \\     
     IMDB-BINARY & 64.09(0.34) & \textbf{68.71}(0.59) & 68.14(0.71) & 67.60(0.75) \\
     MUTAG & \textbf{86.41}(0.33) & 85.58(0.59) & 85.64(0.78) & 84.46(0.44) \\
     NCI1 & \textbf{63.61}(0.00) & 62.79(0.00) & 63.59(0.17) & 63.11(0.93) \\
     PROTEINS & \textbf{66.88}(0.22) & 66.14(0.48) & 65.73(0.69) & 66.16(0.76) \\
     PTC\_FM & 53.84(0.96) & 52.45(1.78) & \textbf{59.14}(1.72) & 56.25(2.04) \\
\end{tabular}
\end{ruledtabular}
\end{table*}

\subsection{\label{sec:level6} Feature analysis}
\begin{figure*}
\centering
\includegraphics[scale=0.33]{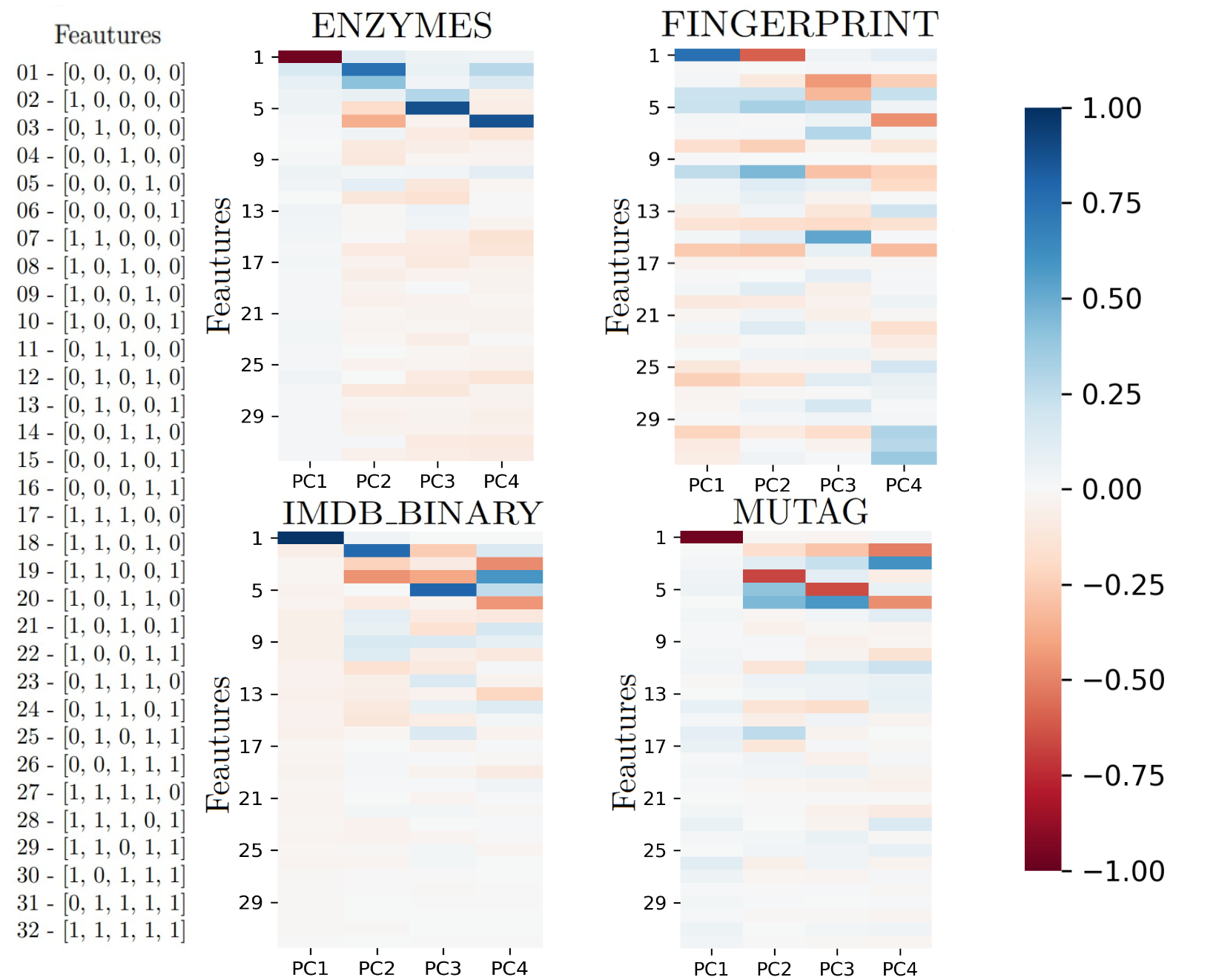}
\caption{Results of the principal component analysis (PCA) on the $\nu$ feature vector entries for the ENZYMES, MUTAG, IMDB\_BINARY and FINGERPRINT datasets. The heatmaps show the weight/coefficient associated with each feature with regard to the first four principal components.}
\label{fig:pca}
\end{figure*}

\begin{figure*}
\centering
\includegraphics[scale=0.2]{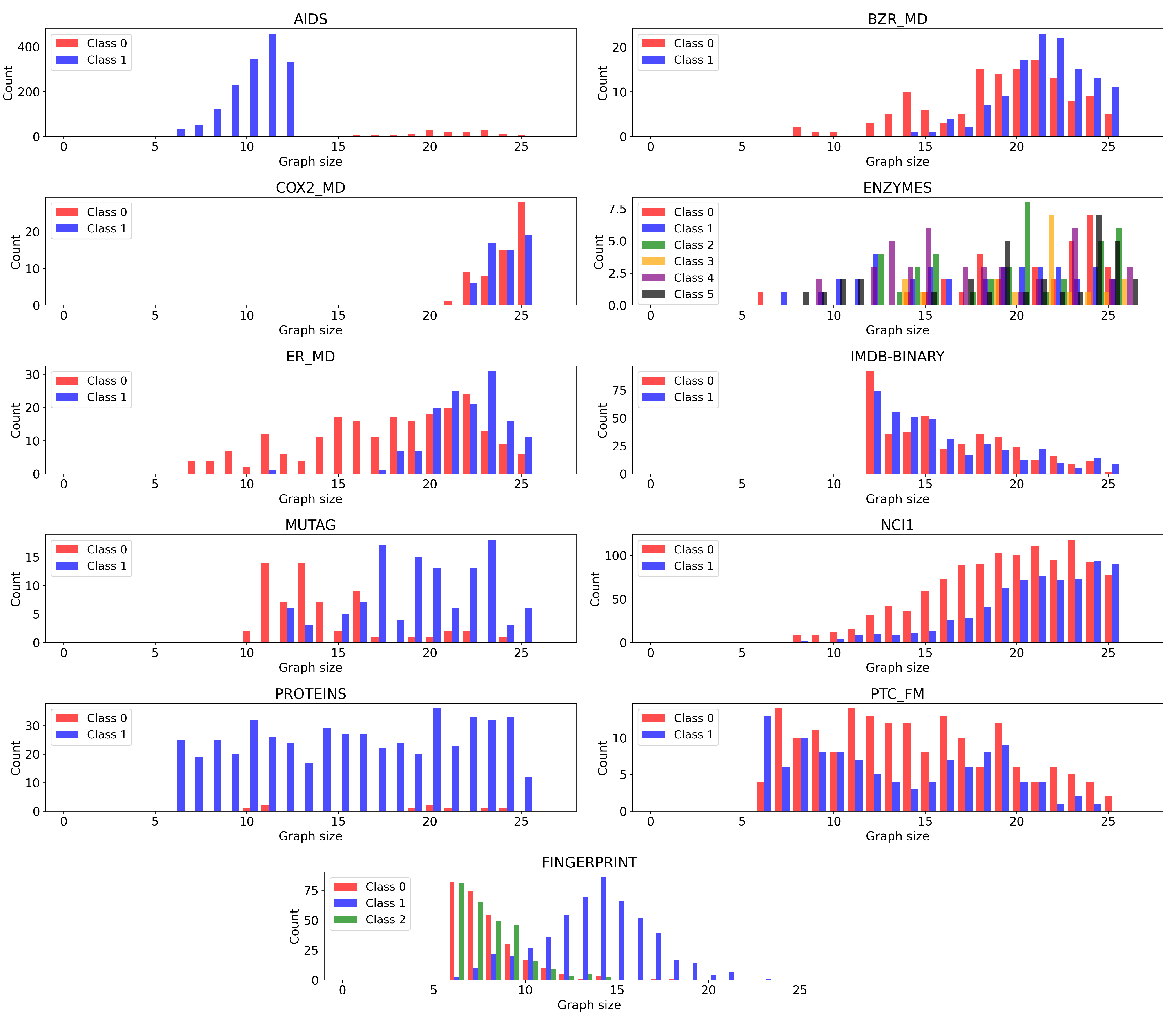}
\caption{Distribution of graph sizes according to class for each dataset. }
\label{fig:size}
\end{figure*}

Fig. \ref{fig:pca} shows the results of performing a principal component analysis on the feature vectors generated using the $\nu$ coarse-graining strategy for various datasets. The analysis shows that the feature associated with vacuum $[0,0,0,0,0]$ contributes by far the most in the support of the first principal component. The analysis also suggests that in some cases the first 10 or so features contribute the most to the support of all of the first four principal components but in other cases, such as with FINGERPRINT, most features contribute more or less equally.

\subsection{\label{sec:level7} Comparison to classical kernels}

Our graph kernel has a time complexity that is equivalent to the random walk kernel and better than the shortest path kernel by a factor of $M$ while outperforming both on most data sets. Furthermore the time complexity of our kernel is not exponential in the size of the subgraphs we are probing like the subgraph matching kernel.
The graphlet sampling kernel does have a more favorable complexity of $\mathcal{O}(Md^{k-1})$ for graphs with maximum degree $d$. However it's important to note that many real world graphs are what are called `scale-free networks' and from the network science literature \cite{barabasi2016} the maximum degree of these graphs grows polynomially with the graph size. Therefore it is possible that the maximum degree of these graphs grows linearly with the graph size i.g. $d \in \mathcal{O}(M)$ which would lead to a complexity of $\mathcal{O}(M^{k})$ for the graphlet sampling kernel. What is also interesting is that GBS kernels seems to provide more distinguishing power than some classical kernels for graphs with no vertex and edge labels like those used in our simulations. Take for example the ENZYMES dataset for which the binary GBS kernel achieves a classification accuracy of $\approx48\%$ while the shortest path kernel reaches about 23\%. If we instead choose to not ignore vertex labels we found the shortest path kernel gives a classification accuracy of about $50\%$. Since the GBS features are related to Hafnians this suggests that features related to the number of perfect matchings of a graph could be more useful for distinguishing graphs of different classes when one has no information about the attributes of the graph nodes.

\section{\label{sec:level4}Conclusion}
We proposed a variation of an algorithm for the machine learning task of classification with graph-structured data that uses a Gaussian boson sampler utilizing only binary detectors. We show that our algorithm outperforms four classical graph kernels for the task of graph classification on many data sets. This is most evident with regard to the ENZYMES data set where the $\nu$ feature map outperforms all methods. The feature corresponding to detecting vacuum in the first 5 modes plays a particularly important role as shown by the principal component analysis as it is related to the Hafnian of all possible subgraphs of $G$ with their first 5 vertices removed. 
We also show that the kernel is sample efficient, a major issue for applications of GBS, and has a time complexity that is comparable with the classical strategies. 

The fact that a GBS kernel using only binary detectors produces such accuracies suggests that technologically more feasible---binary detectors such as SPADs do not operate at cryogenic temperatures such as superconducting PNR ones---GBS devices could have useful applications for machine learning with graph-structured data. We believe that GBS with PNR detectors should also be explored more for this application with particular attention given to coarse-graining strategies that both reduce the sample complexity as well as provide features that capture useful information about the graphs.

A number of questions remain open for investigation such as how vertex and edge labels can be encoded into the GBS device. Also as stated earlier it is known that the existence of a polynomial-time classical algorithm for exact sampling from the output probability distribution of a boson sampling or Gaussian boson sampling device would imply the collapse of the polynomial hierarchy to the third level and thus the existence of such an algorithm is believed to be very unlikely \footnote{Although this has been proven rigorously for the exact sampling case \cite{Aaronson2010, kruse2019detailed} the proof pertaining to the approximate sampling case rests on the assumption of two conjectures known as the Permanent-of-Gaussians Conjecture and the Permanent Anti-Concentration Conjecture which are as of now still unproven.}. This result can also be extended to GBS with binary detectors \cite{quesada2018gaussian}. However it is not known, although some work has been done in this area \cite{bradler2019duality}, if such arguments exist for algorithms that sample from coarse-grained versions of these probability distributions such as those defined in \cite{schuld2020measuring} or our work.
%\textcolor{ForestGreen}
{
It is important to know if such arguments exist as they would imply these quantum kernels are also likely hard to simulate classically. 
}

\begin{acknowledgments}
We thank Maria Schuld, Kamil Br\'adler, Scott Aaronson, Ignacio Cirac, Miller Eaton, Nicolás Quesada, Andrew Blance, and Sefonias Maereg for useful advice and discussions. We  thank Research Computing at the University of Virginia for providing access to, and support with, the Rivanna computing cluster. This work was supported by NSF grant PHY-2112867.
\end{acknowledgments}

\appendix

\section{Reminders about standard GBS\label{app0}}

\subsection{GBS with PNR detectors}
There has been substantial work done already on the connection between graph theory and Gaussian boson sampling with PNR detectors \cite{bradler2018gaussian, bradler2021graph, arrazola2018using}. Here we present the important concepts. Any undirected graph $G$ with no self-loops and $|V|=M$ vertices can be encoded into a $M$-mode GBS setup consisting of a set of $M$ squeezers followed by an interferometer of beamsplitters according to its adjacency matrix $A$. Once the graph is encoded into the GBS device the probability of detecting a specific detection event $\textbf{n} = (n_1,...,n_M)$ is equal to 
\begin{equation}
    p(\textbf{n}) =  \frac{1}{\sqrt{\textrm{det}(Q)}}\frac{\textrm{Haf}(\tilde{A}_{\textbf{n}})}{\textbf{n}!}  =\frac{1}{\sqrt{\textrm{det}(Q)}}\frac{\textrm{Haf}^2(A_{\textbf{n}})}{\textbf{n}!} 
\end{equation}
with
\begin{equation}\label{eq:Q}
    Q = (\mathbb{I}_{2M} - X\tilde{A})^{-1}, \quad
    X =  \begin{bmatrix}
    0 & \mathbb{I}\\
   \mathbb{I} & 0 
    \end{bmatrix},
\end{equation}
$\textbf{n}! = n_1!\times...\times n_M!$, $\tilde{A}=(A \oplus A)$ and Haf() denoting the Hafnian of a $2M \times 2M$ matrix. The Hafnian is a matrix function defined mathematically as 
\begin{equation}
    \textrm{Haf}(A) = \sum_{\pi \in S_M} \prod_{(u,v) \in \pi} A_{u,v},
\end{equation}
where $S_M$ is the partition of the set $\{1,2,...,2M\}$ into unordered disjoint pairs. For example if $M=2$ then $S_M=( \{(1,2), (3,4)\}, \{(1,4),(2,3) \}, \{ (1,3), (2,4)\} )$. If $A$ is the adjacency matrix of an unweighted graph then the Hafnian is equal to the number of perfect matchings of the vertices of the graph. A perfect matching is a partition of the vertex set of a graph into pairs such that each vertex is connected to exactly one edge from the edge set. All perfect matchings of a complete 4-vertex graph are shown in Fig.\ref{fig:haf}.  
\begin{figure}[htb]
\centering
\begin{subfigure}{.2\textwidth}
  \centering
  \includegraphics[scale=0.5]{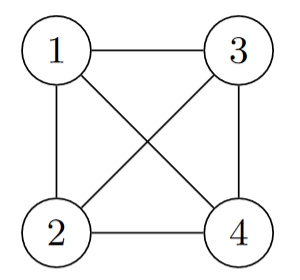}
  \caption{Complete graph of 4 vertices}
  \label{fig:sub3}
\end{subfigure}%
\begin{subfigure}{.3\textwidth}
  \centering
    \includegraphics[scale=0.5]{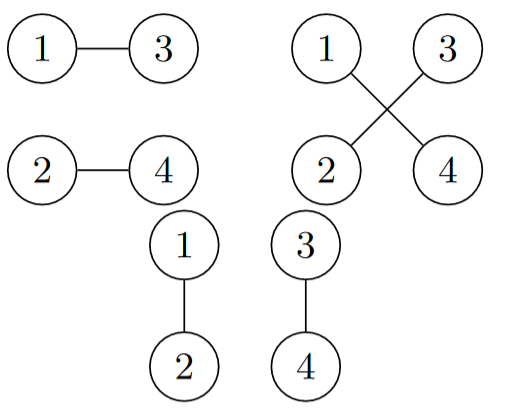}       
  \caption{The three perfect matchings of the complete 4-vertex graph}
  \label{fig:sub4}
\end{subfigure}
\caption{The complete graph of 4 vertices and its corresponding perfect matching}
\label{fig:haf}
\end{figure}

$A_{\textbf{n}}$ is the $n \times n$ submatrix of $A$ induced according to the photon detection event $\textbf{n}$. $A_{\textbf{n}}$ is obtained by repeating the $i$th row and column according to the measurement pattern $\textbf{n}$. If $n_i=0$ then the $i$th row and column are deleted from $A$ but if $n_i>0$ then the $i$th row and column are repeated $n_i$ times. 
For example the probability of detecting the event where each mode has exactly one photon $\textbf{n} = (1, 1, ..., 1)$ would be proportional to the Hafnian of the original matrix $A$ since $A_{\textbf{n}}=A$. 
What this means in terms of the graph is that vertex $i$ and all its edges are either deleted if $n_i=0$ or duplicated $n_i$ times if $n_i > 0$. Therefore the probability of a detection event $\textbf{n}$ is proportional to the squared Hafnian of the subgraph $G_{\textbf{n}}$ corresponding to the induced adjacency matrix $A_{\textbf{n}}$. Examples of different detection events and their corresponding induced subgraphs are shown in Fig.\ref{fig:induced}.
\begin{figure}[htb]
\centering
\includegraphics[scale=0.35]{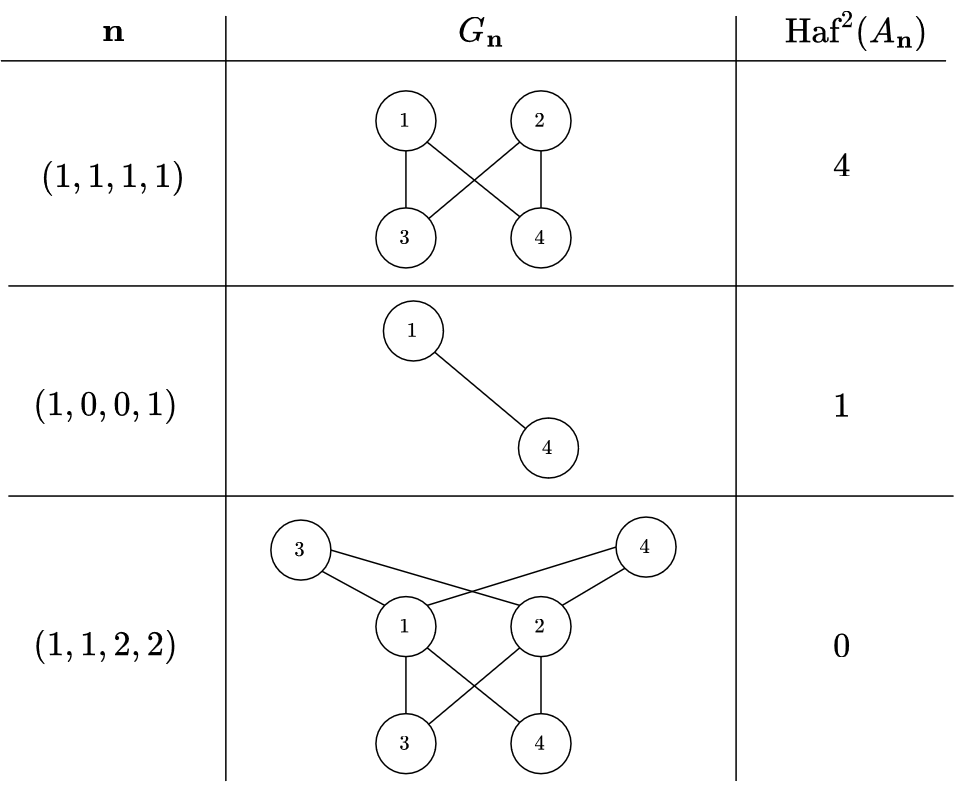}
\caption{
%\textcolor{ForestGreen}
{
Different photon detection events $\textbf{n}$ and the corresponding subgraphs $G_{\textbf{n}}$ they induce  and the value of the squared Hafnians of those subgraphs. The probability of the detection event where each detector detects one photon corresponds to the Hafnian of the graph encoded into the GBS. We can see in the third graph from the top when a detector detects 2 photons the corresponding vertices and their respective edges are duplicated.}}
\label{fig:induced}
\end{figure}

These induced subgraphs are of even size since the number of photons detected is always even due to the fact that the inputs are squeezed states. However when displacement is applied to the modes of the GBS the probability of detecting an odd number of photons is in general not zero anymore and the probability of individual detection events is characterized by the loop Hafnian lHaf() as opposed to the Hafnian \cite{quesada2019franck,bulmer2022threshold}. We do not apply displacement for the numerical experiments done in this paper.

\subsection{Encoding a graph into a GBS device}

To map a graph to a feature vector we must first program the GBS device, by setting the squeezing parameters and beamsplitter angles of the device, according to the adjacency matrix $A$ of the graph. Any adjacency matrix $A \in \mathbb{R}^{M \times M}$ of an undirected graph of $M$ vertices can be mapped to a symmetric, positive definite $2M \times 2M$ covariance matrix $\boldsymbol{\Sigma}$ of a pure Gaussian state of $M$ modes via the following procedure. First a doubled adjacency matrix $\tilde{A}$ is constructed,
\begin{equation}
        \tilde{A} = c \begin{bmatrix}
    A & 0\\
    0 & A
    \end{bmatrix} = c(A \oplus A), 
\end{equation} 
where $c$ is a rescaling constant chosen such that $0 < c < 1/\lambda_{\textrm{max}}$ where $\lambda_{\textrm{max}}$ is the maximum singular value of $A$ \cite{schuld2020measuring}. We use $\tilde{A}$ as, unlike $A$, it is guaranteed to map to a covariance matrix of a pure Gaussian state which is easier to prepare than a mixed one \cite{bradler2018gaussian}. This also has the advantage of allowing us to utilize the identity $\textrm{Haf}(A \oplus A) = \textrm{Haf}^2(A)$ to relate $\Tilde{A}$ to $A$. To map $\tilde{A}$ to a covariance matrix $\boldsymbol{\Sigma}$ we use the following matrix equations
\begin{equation}
    \boldsymbol{\Sigma} = Q - \mathbb{I}_{2M}/2, \textrm{ with } Q = (\mathbb{I}_{2M} - X\tilde{A})^{-1}, \quad
    X =  \begin{bmatrix}
    0 & \mathbb{I}\\
   \mathbb{I} & 0 
    \end{bmatrix}.
\end{equation}
To program the GBS device to sample from the probability distribution corresponding to the covariance matrix $\boldsymbol{\Sigma}$ of the pure Gaussian state we need the unitary matrix $U$ that characterizes the interferometer of the device as well as the squeezing parameters $r_1,...,r_M$ of each of the $M$ squeezers. We can obtain these values by taking the Takagi decomposition of $A$ which is of the form
\begin{equation}
    A = U \textrm{diag}(\lambda_1,...,\lambda_M)U^T.
\end{equation}
The squeezing parameters are determined by the singular values $\lambda_1,...,\lambda_M$  and $c$ via the relationship $r_i=\tanh^{-1}(c\lambda_i)$. 
The singular values and $c$ also uniquely determine the mean photon number $\bar{n}$ of the device according to
\begin{equation}
     \bar{n} = \sum_{i=1}^{M} \frac{(c\lambda_i)^2}{1-(c\lambda_i)^2}=\sum_{i=1}^{M}\sinh^2(r_i).
\end{equation}
The rescaling constant $c$ can be used to adjust $\bar{n}$ as multiplying $A$ by $c$ scales it's singular values without changing the structure of the graph other than scaling all edge weights by $c$.
The matrix $U$ can be decomposed to give the parameters of the beamsplitter gates of the interferometer \cite{clements2016optimal}.

The GBS device, if using PNR detectors, now samples from the probability distribution
\begin{equation}
    p(\textbf{n}) = \frac{1}{\sqrt{\textrm{det}(Q)}}\frac{\textrm{Haf}(\tilde{A}_{\textbf{n}})}{\textbf{n}!} = \frac{1}{\sqrt{\textrm{det}(Q)}}\frac{\textrm{Haf}^2(A_{\textbf{n}})}{\textbf{n}!}.
\end{equation}

\section{Super Exponential Growth of GBS Detection Events for $M \in \mathcal{O}(n^2)$ \label{app1}}
\begin{lemma}
$\frac{(n+M-1)!}{n!(M-1)!}\in \omega(\floor{\sqrt{M}}^{\floor{{\sqrt{M}}}})$ \textrm{for} $n = \floor{\sqrt{M}}$ 
\end{lemma}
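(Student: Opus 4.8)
The plan is to bound $\frac{(n+M-1)!}{n!(M-1)!}$ from below by a short explicit product and then compare it factor-by-factor with $\floor{\sqrt{M}}^{\floor{\sqrt{M}}}$. Set $k=\floor{\sqrt{M}}$, so the hypothesis reads $n=k$ and, crucially, $k^{2}\le M$. First I would write the quantity as a product of exactly $k$ factors,
\[
  \frac{(n+M-1)!}{n!\,(M-1)!}=\binom{M+k-1}{k}=\prod_{j=1}^{k}\frac{M+j-1}{j},
\]
and observe that, since $M+j-1\ge M$ for every $j\ge1$, each factor is at least $M/j$, so that $\binom{M+k-1}{k}\ge M^{k}/k!$.

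Next I would invoke the constraint $M\ge k^{2}$ to get $M^{k}/k!\ge k^{2k}/k!=k^{k}\,(k^{k}/k!)$, and bound the surplus factor by $k^{k}/k!=\prod_{j=1}^{k}(k/j)\ge k$ --- the $j=1$ term is exactly $k$, and every other term $k/j$ with $2\le j\le k$ is $\ge1$. Chaining the inequalities,
\[
  \frac{(n+M-1)!}{n!\,(M-1)!}\ \ge\ k\cdot k^{k}\ =\ \floor{\sqrt{M}}\cdot\floor{\sqrt{M}}^{\floor{\sqrt{M}}},
\]
so dividing through by $\floor{\sqrt{M}}^{\floor{\sqrt{M}}}$ the ratio is at least $\floor{\sqrt{M}}$, which tends to $\infty$ with $M$. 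This is precisely the claim $\frac{(n+M-1)!}{n!(M-1)!}\in\omega(\floor{\sqrt{M}}^{\floor{\sqrt{M}}})$.

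I do not anticipate a genuine obstacle: every estimate used is deliberately crude. The only points that need a moment's care are (i) the floor bookkeeping --- one should work with the clean consequence $\floor{\sqrt{M}}^{2}\le M$ rather than with $\sqrt{M}$ directly, and recall that $\omega(\cdot)$ is asymptotic so the degenerate cases with $\floor{\sqrt{M}}\in\{0,1\}$ (where $\floor{\sqrt{M}}^{\floor{\sqrt{M}}}=1$) can be ignored; and (ii) verifying that the very loose step $\binom{M+k-1}{k}\ge M^{k}/k!$ is still strong enough. It is, exactly because the regime $n\in\Theta(\sqrt{M})$ (equivalently $M\in\mathcal{O}(n^{2})$) leaves the surplus factor $k^{k}/k!\ge k$ intact after cancelling $k^{k}$; outside this regime one would instead reach for Stirling's formula, but here the elementary bound already delivers the divergent ratio.
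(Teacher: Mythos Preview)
Your argument is correct and follows the same overall strategy as the paper: substitute $n=k=\floor{\sqrt{M}}$, write $\binom{M+k-1}{k}$ as a product of $k$ factors, and bound factor by factor using $k^{2}\le M$. The difference is in which piece you simplify. The paper replaces the denominator $i$ by $\floor{\sqrt{M}}$ in each factor $\tfrac{M-1}{i}+1$, arriving at $\bigl(\tfrac{M}{\floor{\sqrt{M}}}+1-\tfrac{1}{\floor{\sqrt{M}}}\bigr)^{\floor{\sqrt{M}}}\ge \floor{\sqrt{M}}^{\floor{\sqrt{M}}}$; you instead keep the denominators, bound the numerators by $M$, and then use $M^{k}/k!\ge k^{2k}/k!\ge k\cdot k^{k}$. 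Your route buys a concrete divergent ratio---after dividing by $\floor{\sqrt{M}}^{\floor{\sqrt{M}}}$ you are left with at least $\floor{\sqrt{M}}\to\infty$---so the little-$\omega$ conclusion is immediate. The paper's displayed chain, by contrast, terminates with a strict inequality to $\floor{\sqrt{M}}^{\floor{\sqrt{M}}}$, which on its face only gives $\Omega$; your extra factor of $k$ is exactly what is needed to upgrade this to $\omega$.
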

\begin{proof}
    \begin{align*}
    \frac{(n+M-1)!}{n!(M-1)!} &\xrightarrow{n = \floor{\sqrt{M}}}  \frac{(\floor{\sqrt{M}}+M-1)!}{(\floor{\sqrt{M}})!(M-1)!} \\ 
    \frac{(\floor{\sqrt{M}}+M-1)!}{(\floor{\sqrt{M}})!(M-1)!}&= \frac{[\prod_{i=1}^{\floor{\sqrt{M}}} (M-1+i)](M-1)!}{[\prod_{i=1}^{\floor{\sqrt{M}}}i] (M-1)!} \\ 
    &= \frac{[\prod_{i=1}^{\floor{\sqrt{M}}}  (M-1+i)]}{[\prod_{i=1}^{\floor{\sqrt{M}}}i]} \\ 
    &%= \prod_{i=1}^{\floor{\sqrt{M}}} \frac{M-1+i}{i} 
    = \prod_{i=1}^{\floor{\sqrt{M}}}[\frac{M-1}{i} + 1] \\
    &> \prod_{i=1}^{\floor{\sqrt{M}}}[\frac{M-1}{\floor{\sqrt{M}}} + 1] \\
    &= (\frac{M-1}{\floor{\sqrt{M}}} + 1)^{\floor{\sqrt{M}}} \\
    &= (\frac{M}{\floor{\sqrt{M}}} + 1 - \frac{1}{\floor{\sqrt{M}}})^{\floor{\sqrt{M}}} \\
    &\geq {\floor{\sqrt{M}}}^{\floor{\sqrt{M}}}
    \end{align*}
   %\textcolor{ForestGreen}
   {
   Therefore $\frac{(n+M-1)!}{n!(M-1)!} \in \omega(\floor{\sqrt{M}}^{\floor{{\sqrt{M}}}}) $ for $n = \floor{\sqrt{M}}$.}
\end{proof}
\section{Induction Proof for ${n \choose k} \in \Theta(n^k)$ \label{app2}}
\begin{lemma}
${n \choose k} \in \Theta(n^k)$
\end{lemma}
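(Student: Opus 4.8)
The plan is to prove the statement by induction on $k$, treating $k$ as a fixed nonnegative integer and letting $n\to\infty$. The engine of the induction is the elementary identity ${n \choose k+1} = \frac{n-k}{k+1}{n \choose k}$, which upgrades a $\Theta(n^k)$ estimate for ${n\choose k}$ into a $\Theta(n^{k+1})$ estimate for ${n\choose k+1}$ at the cost of an extra factor growing linearly in $n$.

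For the base case I would take $k=0$: ${n\choose 0}=1$, so ${n\choose 0}\in\Theta(1)=\Theta(n^0)$ trivially (the case $k=1$, ${n\choose 1}=n\in\Theta(n)$, works equally well as a starting point).

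For the inductive step, suppose there exist constants $c_1,c_2>0$ and a threshold $n_0$ with $c_1 n^k \le {n\choose k} \le c_2 n^k$ for all $n\ge n_0$. Multiplying through by $\frac{n-k}{k+1}$: for the upper bound use $\frac{n-k}{k+1}\le\frac{n}{k+1}$ to get ${n\choose k+1}\le\frac{c_2}{k+1}\,n^{k+1}$; for the lower bound observe that $n\ge 2k$ forces $n-k\ge n/2$, hence $\frac{n-k}{k+1}\ge\frac{n}{2(k+1)}$ and ${n\choose k+1}\ge\frac{c_1}{2(k+1)}\,n^{k+1}$. Thus for all $n\ge\max(n_0,2k)$ we have the two-sided bound $\frac{c_1}{2(k+1)}n^{k+1}\le{n\choose k+1}\le\frac{c_2}{k+1}n^{k+1}$, i.e.\ ${n\choose k+1}\in\Theta(n^{k+1})$, which closes the induction.

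The only real bookkeeping issue---hardly a genuine obstacle---is to keep the implied constants strictly positive and finite as $k$ grows: each step multiplies them by a factor of order $1/k$ and raises the validity threshold to $\max(n_0,2k)$, all of which is harmless for a fixed $k$. As a sanity check one can bypass the induction entirely and argue directly from ${n\choose k}=\frac{1}{k!}\prod_{j=0}^{k-1}(n-j)$, bounding the product above by $n^k$ and below by $(n-k+1)^k\ge (n/2)^k$ once $n\ge 2k$; the inductive presentation is chosen here only because it matches the section heading.
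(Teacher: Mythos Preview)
Your proof is correct and follows essentially the same route as the paper: induction on $k$ driven by the identity ${n\choose k+1}=\frac{n-k}{k+1}{n\choose k}$. The only cosmetic differences are that the paper starts the base case at $k=2$ and verifies each $\Theta$-claim via the limit criterion $\lim_{n\to\infty}{n\choose k}/n^k=1/k!$ rather than with explicit upper and lower constants as you do.
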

\begin{proof}
    \textrm{Base Case: $k=2$\\} 
    \begin{align*}
    {n \choose 2} %&= \frac{n!}{2!(n-2)!} \\
   % &=\frac{n(n-1)(n-2)!}{2!(n-2)!} \\
    &= \frac{n(n-1)}{2!} 
    \end{align*}
    \begin{align*}
    \lim_{n\to\infty} \frac{\frac{n(n-1)}{2!}}{n^2} = \frac{1}{2!} \\
    0 < \frac{1}{2!} < \infty \\ 	\therefore {n \choose 2} \in \Theta(n^2)
    \end{align*}
    %\textcolor{ForestGreen}
    {
    \textrm{Assume result holds up to $k=\ell$} 
    \begin{align*}
    {n \choose \ell} %&= \frac{n!}{\ell!(n-\ell)!} \\
  %  &= \frac{n(n-1)(n-2) \cdots (n-(\ell+1))(n-\ell)!}{\ell!(n-\ell)!} \\  
    &= \frac{n(n-1)(n-2) \cdots (n-\ell+1)}{\ell!} \in \Theta(n^\ell)
    \end{align*}
    Inductive Step: $k=\ell+1$ 
    \begin{align*}
    {n \choose \ell+1} %&= \frac{n!}{(\ell+1)!(n-(\ell+1))!} \\
    %&= \frac{n(n-1)(n-2) \cdots (n-(\ell+2))(n-(\ell+1))!}{(\ell+1)!(n-(\ell+1))!} \\  
    &= \frac{n(n-1)(n-2) \cdots (n-\ell)}{(\ell+1)!}  
    \end{align*}
    \begin{align*}
 %   &\lim_{n\to\infty} \frac{\frac{\overbrace{n(n-1)(n-2) \cdots (n-\ell-2)}^\text{\small$\ell$+1 terms}}{(\ell+1)!}}{n^{\ell+1}} \\
    &\lim_{n\to\infty} \frac{\frac{n(n-1)(n-2) \cdots (n-\ell)}{(\ell+1)!}}{n^{\ell+1}} \\
    =&\lim_{n\to\infty} \frac{\frac{n(n-1)(n-2) \cdots (n-\ell+1)}{\ell!}}{n^\ell}\frac{\frac{(n-\ell)}{\ell+1}}{n} \\
    =&\lim_{n\to\infty} \frac{\frac{n(n-1)(n-2) \cdots (n-\ell+1)}{\ell!}}{n^\ell}\lim_{n\to\infty}\frac{\frac{(n-\ell)}{\ell+1}}{n} \\
    \equiv &\frac{1}{\ell!}\frac{1}{\ell+1}\\
    =&\frac{1}{(\ell+1)!} \\
    & 0 < \frac{1}{(\ell+1)!} < \infty \\ 
    &\therefore {n \choose \ell+1} \in \Theta(n^{\ell+1})
    \end{align*}
    }
\end{proof}

% The \nocite command causes all entries in a bibliography to be printed out
% whether or not they are actually referenced in the text. This is appropriate
% for the sample file to show the different styles of references, but authors
% most likely will not want to use it.
%\nocite{*}

\bibliography{GBSbin,Pfister}% Produces the bibliography via BibTeX.

%apsrev4-2.bst 2019-01-14 (MD) hand-edited version of apsrev4-1.bst
%Control: key (0)
%Control: author (8) initials jnrlst
%Control: editor formatted (1) identically to author
%Control: production of article title (0) allowed
%Control: page (0) single
%Control: year (1) truncated
%Control: production of eprint (0) enabled
\begin{thebibliography}{42}%
\makeatletter
\providecommand \@ifxundefined [1]{%
 \@ifx{#1\undefined}
}%
\providecommand \@ifnum [1]{%
 \ifnum #1\expandafter \@firstoftwo
 \else \expandafter \@secondoftwo
 \fi
}%
\providecommand \@ifx [1]{%
 \ifx #1\expandafter \@firstoftwo
 \else \expandafter \@secondoftwo
 \fi
}%
\providecommand \natexlab [1]{#1}%
\providecommand \enquote  [1]{``#1''}%
\providecommand \bibnamefont  [1]{#1}%
\providecommand \bibfnamefont [1]{#1}%
\providecommand \citenamefont [1]{#1}%
\providecommand \href@noop [0]{\@secondoftwo}%
\providecommand \href [0]{\begingroup \@sanitize@url \@href}%
\providecommand \@href[1]{\@@startlink{#1}\@@href}%
\providecommand \@@href[1]{\endgroup#1\@@endlink}%
\providecommand \@sanitize@url [0]{\catcode `\\12\catcode `\$12\catcode
  `\&12\catcode `\#12\catcode `\^12\catcode `\_12\catcode `\%12\relax}%
\providecommand \@@startlink[1]{}%
\providecommand \@@endlink[0]{}%
\providecommand \url  [0]{\begingroup\@sanitize@url \@url }%
\providecommand \@url [1]{\endgroup\@href {#1}{\urlprefix }}%
\providecommand \urlprefix  [0]{URL }%
\providecommand \Eprint [0]{\href }%
\providecommand \doibase [0]{https://doi.org/}%
\providecommand \selectlanguage [0]{\@gobble}%
\providecommand \bibinfo  [0]{\@secondoftwo}%
\providecommand \bibfield  [0]{\@secondoftwo}%
\providecommand \translation [1]{[#1]}%
\providecommand \BibitemOpen [0]{}%
\providecommand \bibitemStop [0]{}%
\providecommand \bibitemNoStop [0]{.\EOS\space}%
\providecommand \EOS [0]{\spacefactor3000\relax}%
\providecommand \BibitemShut  [1]{\csname bibitem#1\endcsname}%
\let\auto@bib@innerbib\@empty
%</preamble>
\bibitem [{\citenamefont {Nikolentzos}\ \emph {et~al.}(2021)\citenamefont
  {Nikolentzos}, \citenamefont {Siglidis},\ and\ \citenamefont
  {Vazirgiannis}}]{nikolentzos2021graph}%
  \BibitemOpen
  \bibfield  {author} {\bibinfo {author} {\bibfnamefont {G.}~\bibnamefont
  {Nikolentzos}}, \bibinfo {author} {\bibfnamefont {G.}~\bibnamefont
  {Siglidis}},\ and\ \bibinfo {author} {\bibfnamefont {M.}~\bibnamefont
  {Vazirgiannis}},\ }\bibfield  {title} {\bibinfo {title} {Graph kernels: A
  survey},\ }\href@noop {} {\bibfield  {journal} {\bibinfo  {journal} {Journal
  of Artificial Intelligence Research}\ }\textbf {\bibinfo {volume} {72}},\
  \bibinfo {pages} {943} (\bibinfo {year} {2021})}\BibitemShut {NoStop}%
\bibitem [{\citenamefont {Kriege}\ \emph {et~al.}(2020)\citenamefont {Kriege},
  \citenamefont {Johansson},\ and\ \citenamefont {Morris}}]{kriege2020survey}%
  \BibitemOpen
  \bibfield  {author} {\bibinfo {author} {\bibfnamefont {N.~M.}\ \bibnamefont
  {Kriege}}, \bibinfo {author} {\bibfnamefont {F.~D.}\ \bibnamefont
  {Johansson}},\ and\ \bibinfo {author} {\bibfnamefont {C.}~\bibnamefont
  {Morris}},\ }\bibfield  {title} {\bibinfo {title} {A survey on graph
  kernels},\ }\href@noop {} {\bibfield  {journal} {\bibinfo  {journal} {Applied
  Network Science}\ }\textbf {\bibinfo {volume} {5}},\ \bibinfo {pages} {1}
  (\bibinfo {year} {2020})}\BibitemShut {NoStop}%
\bibitem [{\citenamefont {Schuld}\ \emph {et~al.}(2020)\citenamefont {Schuld},
  \citenamefont {Br{\'a}dler}, \citenamefont {Israel}, \citenamefont {Su},\
  and\ \citenamefont {Gupt}}]{schuld2020measuring}%
  \BibitemOpen
  \bibfield  {author} {\bibinfo {author} {\bibfnamefont {M.}~\bibnamefont
  {Schuld}}, \bibinfo {author} {\bibfnamefont {K.}~\bibnamefont {Br{\'a}dler}},
  \bibinfo {author} {\bibfnamefont {R.}~\bibnamefont {Israel}}, \bibinfo
  {author} {\bibfnamefont {D.}~\bibnamefont {Su}},\ and\ \bibinfo {author}
  {\bibfnamefont {B.}~\bibnamefont {Gupt}},\ }\bibfield  {title} {\bibinfo
  {title} {Measuring the similarity of graphs with a gaussian boson sampler},\
  }\href@noop {} {\bibfield  {journal} {\bibinfo  {journal} {Physical Review
  A}\ }\textbf {\bibinfo {volume} {101}},\ \bibinfo {pages} {032314} (\bibinfo
  {year} {2020})}\BibitemShut {NoStop}%
\bibitem [{\citenamefont {Aaronson}\ and\ \citenamefont
  {Arkhipov}(2010)}]{Aaronson2010}%
  \BibitemOpen
  \bibfield  {author} {\bibinfo {author} {\bibfnamefont {S.}~\bibnamefont
  {Aaronson}}\ and\ \bibinfo {author} {\bibfnamefont {A.}~\bibnamefont
  {Arkhipov}},\ }\bibfield  {title} {\bibinfo {title} {The computational
  complexity of linear optics},\ }\href@noop {} {\bibfield  {journal} {\bibinfo
   {journal} {Electronic Colloquium on Computational Complexity}\ }\textbf
  {\bibinfo {volume} {Report No.~170}},\ \bibinfo {pages} {1} (\bibinfo {year}
  {2010})}\BibitemShut {NoStop}%
\bibitem [{\citenamefont {Valiant}(1979)}]{valiant1979complexity}%
  \BibitemOpen
  \bibfield  {author} {\bibinfo {author} {\bibfnamefont {L.~G.}\ \bibnamefont
  {Valiant}},\ }\bibfield  {title} {\bibinfo {title} {The complexity of
  computing the permanent},\ }\href@noop {} {\bibfield  {journal} {\bibinfo
  {journal} {Theoretical computer science}\ }\textbf {\bibinfo {volume} {8}},\
  \bibinfo {pages} {189} (\bibinfo {year} {1979})}\BibitemShut {NoStop}%
\bibitem [{\citenamefont {Hamilton}\ \emph {et~al.}(2017)\citenamefont
  {Hamilton}, \citenamefont {Kruse}, \citenamefont {Sansoni}, \citenamefont
  {Barkhofen}, \citenamefont {Silberhorn},\ and\ \citenamefont
  {Jex}}]{Hamilton2017}%
  \BibitemOpen
  \bibfield  {author} {\bibinfo {author} {\bibfnamefont {C.~S.}\ \bibnamefont
  {Hamilton}}, \bibinfo {author} {\bibfnamefont {R.}~\bibnamefont {Kruse}},
  \bibinfo {author} {\bibfnamefont {L.}~\bibnamefont {Sansoni}}, \bibinfo
  {author} {\bibfnamefont {S.}~\bibnamefont {Barkhofen}}, \bibinfo {author}
  {\bibfnamefont {C.}~\bibnamefont {Silberhorn}},\ and\ \bibinfo {author}
  {\bibfnamefont {I.}~\bibnamefont {Jex}},\ }\bibfield  {title} {\bibinfo
  {title} {Gaussian boson sampling},\ }\href
  {https://doi.org/10.1103/PhysRevLett.119.170501} {\bibfield  {journal}
  {\bibinfo  {journal} {Phys. Rev. Lett.}\ }\textbf {\bibinfo {volume} {119}},\
  \bibinfo {pages} {170501} (\bibinfo {year} {2017})}\BibitemShut {NoStop}%
\bibitem [{\citenamefont {Bachor}\ and\ \citenamefont
  {Ralph}(2019)}]{Bachor2019}%
  \BibitemOpen
  \bibfield  {author} {\bibinfo {author} {\bibfnamefont {H.-A.}\ \bibnamefont
  {Bachor}}\ and\ \bibinfo {author} {\bibfnamefont {T.~C.}\ \bibnamefont
  {Ralph}},\ }\href@noop {} {\emph {\bibinfo {title} {A Guide to Experiments in
  Quantum Optics}}},\ \bibinfo {edition} {3rd}\ ed.\ (\bibinfo  {publisher}
  {Wiley-VCH},\ \bibinfo {year} {2019})\BibitemShut {NoStop}%
\bibitem [{\citenamefont {Zhong}\ \emph {et~al.}(2020)\citenamefont {Zhong},
  \citenamefont {Wang}, \citenamefont {Deng}, \citenamefont {Chen},
  \citenamefont {Peng}, \citenamefont {Luo}, \citenamefont {Qin}, \citenamefont
  {Wu}, \citenamefont {Ding}, \citenamefont {Hu}, \citenamefont {Hu},
  \citenamefont {Yang}, \citenamefont {Zhang}, \citenamefont {Li},
  \citenamefont {Li}, \citenamefont {Jiang}, \citenamefont {Gan}, \citenamefont
  {Yang}, \citenamefont {You}, \citenamefont {Wang}, \citenamefont {Li},
  \citenamefont {Liu}, \citenamefont {Lu},\ and\ \citenamefont
  {Pan}}]{Zhong2020}%
  \BibitemOpen
  \bibfield  {author} {\bibinfo {author} {\bibfnamefont {H.-S.}\ \bibnamefont
  {Zhong}}, \bibinfo {author} {\bibfnamefont {H.}~\bibnamefont {Wang}},
  \bibinfo {author} {\bibfnamefont {Y.-H.}\ \bibnamefont {Deng}}, \bibinfo
  {author} {\bibfnamefont {M.-C.}\ \bibnamefont {Chen}}, \bibinfo {author}
  {\bibfnamefont {L.-C.}\ \bibnamefont {Peng}}, \bibinfo {author}
  {\bibfnamefont {Y.-H.}\ \bibnamefont {Luo}}, \bibinfo {author} {\bibfnamefont
  {J.}~\bibnamefont {Qin}}, \bibinfo {author} {\bibfnamefont {D.}~\bibnamefont
  {Wu}}, \bibinfo {author} {\bibfnamefont {X.}~\bibnamefont {Ding}}, \bibinfo
  {author} {\bibfnamefont {Y.}~\bibnamefont {Hu}}, \bibinfo {author}
  {\bibfnamefont {P.}~\bibnamefont {Hu}}, \bibinfo {author} {\bibfnamefont
  {X.-Y.}\ \bibnamefont {Yang}}, \bibinfo {author} {\bibfnamefont {W.-J.}\
  \bibnamefont {Zhang}}, \bibinfo {author} {\bibfnamefont {H.}~\bibnamefont
  {Li}}, \bibinfo {author} {\bibfnamefont {Y.}~\bibnamefont {Li}}, \bibinfo
  {author} {\bibfnamefont {X.}~\bibnamefont {Jiang}}, \bibinfo {author}
  {\bibfnamefont {L.}~\bibnamefont {Gan}}, \bibinfo {author} {\bibfnamefont
  {G.}~\bibnamefont {Yang}}, \bibinfo {author} {\bibfnamefont {L.}~\bibnamefont
  {You}}, \bibinfo {author} {\bibfnamefont {Z.}~\bibnamefont {Wang}}, \bibinfo
  {author} {\bibfnamefont {L.}~\bibnamefont {Li}}, \bibinfo {author}
  {\bibfnamefont {N.-L.}\ \bibnamefont {Liu}}, \bibinfo {author} {\bibfnamefont
  {C.-Y.}\ \bibnamefont {Lu}},\ and\ \bibinfo {author} {\bibfnamefont {J.-W.}\
  \bibnamefont {Pan}},\ }\bibfield  {title} {\bibinfo {title} {Quantum
  computational advantage using photons},\ }\href
  {https://doi.org/10.1126/science.abe8770} {\bibfield  {journal} {\bibinfo
  {journal} {Science}\ }\textbf {\bibinfo {volume} {370}},\ \bibinfo {pages}
  {1460} (\bibinfo {year} {2020})},\ \Eprint
  {https://arxiv.org/abs/https://science.sciencemag.org/content/370/6523/1460.full.pdf}
  {https://science.sciencemag.org/content/370/6523/1460.full.pdf} \BibitemShut
  {NoStop}%
\bibitem [{\citenamefont {Madsen}\ \emph {et~al.}(2022)\citenamefont {Madsen},
  \citenamefont {Laudenbach}, \citenamefont {Askarani}, \citenamefont
  {Rortais}, \citenamefont {Vincent}, \citenamefont {Bulmer}, \citenamefont
  {Miatto}, \citenamefont {Neuhaus}, \citenamefont {Helt}, \citenamefont
  {Collins}, \citenamefont {Lita}, \citenamefont {Gerrits}, \citenamefont
  {Nam}, \citenamefont {Vaidya}, \citenamefont {Menotti}, \citenamefont
  {Dhand}, \citenamefont {Vernon}, \citenamefont {Quesada},\ and\ \citenamefont
  {Lavoie}}]{Madsen2022}%
  \BibitemOpen
  \bibfield  {author} {\bibinfo {author} {\bibfnamefont {L.~S.}\ \bibnamefont
  {Madsen}}, \bibinfo {author} {\bibfnamefont {F.}~\bibnamefont {Laudenbach}},
  \bibinfo {author} {\bibfnamefont {M.~F.}\ \bibnamefont {Askarani}}, \bibinfo
  {author} {\bibfnamefont {F.}~\bibnamefont {Rortais}}, \bibinfo {author}
  {\bibfnamefont {T.}~\bibnamefont {Vincent}}, \bibinfo {author} {\bibfnamefont
  {J.~F.~F.}\ \bibnamefont {Bulmer}}, \bibinfo {author} {\bibfnamefont {F.~M.}\
  \bibnamefont {Miatto}}, \bibinfo {author} {\bibfnamefont {L.}~\bibnamefont
  {Neuhaus}}, \bibinfo {author} {\bibfnamefont {L.~G.}\ \bibnamefont {Helt}},
  \bibinfo {author} {\bibfnamefont {M.~J.}\ \bibnamefont {Collins}}, \bibinfo
  {author} {\bibfnamefont {A.~E.}\ \bibnamefont {Lita}}, \bibinfo {author}
  {\bibfnamefont {T.}~\bibnamefont {Gerrits}}, \bibinfo {author} {\bibfnamefont
  {S.~W.}\ \bibnamefont {Nam}}, \bibinfo {author} {\bibfnamefont {V.~D.}\
  \bibnamefont {Vaidya}}, \bibinfo {author} {\bibfnamefont {M.}~\bibnamefont
  {Menotti}}, \bibinfo {author} {\bibfnamefont {I.}~\bibnamefont {Dhand}},
  \bibinfo {author} {\bibfnamefont {Z.}~\bibnamefont {Vernon}}, \bibinfo
  {author} {\bibfnamefont {N.}~\bibnamefont {Quesada}},\ and\ \bibinfo {author}
  {\bibfnamefont {J.}~\bibnamefont {Lavoie}},\ }\bibfield  {title} {\bibinfo
  {title} {Quantum computational advantage with a programmable photonic
  processor},\ }\href {https://doi.org/10.1038/s41586-022-04725-x} {\bibfield
  {journal} {\bibinfo  {journal} {Nature}\ }\textbf {\bibinfo {volume} {606}},\
  \bibinfo {pages} {75} (\bibinfo {year} {2022})}\BibitemShut {NoStop}%
\bibitem [{\citenamefont {Arute}\ \emph {et~al.}(2019)\citenamefont {Arute},
  \citenamefont {Arya}, \citenamefont {Babbush}, \citenamefont {Bacon},
  \citenamefont {Bardin}, \citenamefont {Barends}, \citenamefont {Biswas},
  \citenamefont {Boixo}, \citenamefont {Brandao}, \citenamefont {Buell},
  \citenamefont {Burkett}, \citenamefont {Chen}, \citenamefont {Chen},
  \citenamefont {Chiaro}, \citenamefont {Collins}, \citenamefont {Courtney},
  \citenamefont {Dunsworth}, \citenamefont {Farhi}, \citenamefont {Foxen},
  \citenamefont {Fowler}, \citenamefont {Gidney}, \citenamefont {Giustina},
  \citenamefont {Graff}, \citenamefont {Guerin}, \citenamefont {Habegger},
  \citenamefont {Harrigan}, \citenamefont {Hartmann}, \citenamefont {Ho},
  \citenamefont {Hoffmann}, \citenamefont {Huang}, \citenamefont {Humble},
  \citenamefont {Isakov}, \citenamefont {Jeffrey}, \citenamefont {Jiang},
  \citenamefont {Kafri}, \citenamefont {Kechedzhi}, \citenamefont {Kelly},
  \citenamefont {Klimov}, \citenamefont {Knysh}, \citenamefont {Korotkov},
  \citenamefont {Kostritsa}, \citenamefont {Landhuis}, \citenamefont
  {Lindmark}, \citenamefont {Lucero}, \citenamefont {Lyakh}, \citenamefont
  {Mandr{\`a}}, \citenamefont {McClean}, \citenamefont {McEwen}, \citenamefont
  {Megrant}, \citenamefont {Mi}, \citenamefont {Michielsen}, \citenamefont
  {Mohseni}, \citenamefont {Mutus}, \citenamefont {Naaman}, \citenamefont
  {Neeley}, \citenamefont {Neill}, \citenamefont {Niu}, \citenamefont {Ostby},
  \citenamefont {Petukhov}, \citenamefont {Platt}, \citenamefont {Quintana},
  \citenamefont {Rieffel}, \citenamefont {Roushan}, \citenamefont {Rubin},
  \citenamefont {Sank}, \citenamefont {Satzinger}, \citenamefont {Smelyanskiy},
  \citenamefont {Sung}, \citenamefont {Trevithick}, \citenamefont
  {Vainsencher}, \citenamefont {Villalonga}, \citenamefont {White},
  \citenamefont {Yao}, \citenamefont {Yeh}, \citenamefont {Zalcman},
  \citenamefont {Neven},\ and\ \citenamefont {Martinis}}]{Arute2019}%
  \BibitemOpen
  \bibfield  {author} {\bibinfo {author} {\bibfnamefont {F.}~\bibnamefont
  {Arute}}, \bibinfo {author} {\bibfnamefont {K.}~\bibnamefont {Arya}},
  \bibinfo {author} {\bibfnamefont {R.}~\bibnamefont {Babbush}}, \bibinfo
  {author} {\bibfnamefont {D.}~\bibnamefont {Bacon}}, \bibinfo {author}
  {\bibfnamefont {J.~C.}\ \bibnamefont {Bardin}}, \bibinfo {author}
  {\bibfnamefont {R.}~\bibnamefont {Barends}}, \bibinfo {author} {\bibfnamefont
  {R.}~\bibnamefont {Biswas}}, \bibinfo {author} {\bibfnamefont
  {S.}~\bibnamefont {Boixo}}, \bibinfo {author} {\bibfnamefont {F.~G. S.~L.}\
  \bibnamefont {Brandao}}, \bibinfo {author} {\bibfnamefont {D.~A.}\
  \bibnamefont {Buell}}, \bibinfo {author} {\bibfnamefont {B.}~\bibnamefont
  {Burkett}}, \bibinfo {author} {\bibfnamefont {Y.}~\bibnamefont {Chen}},
  \bibinfo {author} {\bibfnamefont {Z.}~\bibnamefont {Chen}}, \bibinfo {author}
  {\bibfnamefont {B.}~\bibnamefont {Chiaro}}, \bibinfo {author} {\bibfnamefont
  {R.}~\bibnamefont {Collins}}, \bibinfo {author} {\bibfnamefont
  {W.}~\bibnamefont {Courtney}}, \bibinfo {author} {\bibfnamefont
  {A.}~\bibnamefont {Dunsworth}}, \bibinfo {author} {\bibfnamefont
  {E.}~\bibnamefont {Farhi}}, \bibinfo {author} {\bibfnamefont
  {B.}~\bibnamefont {Foxen}}, \bibinfo {author} {\bibfnamefont
  {A.}~\bibnamefont {Fowler}}, \bibinfo {author} {\bibfnamefont
  {C.}~\bibnamefont {Gidney}}, \bibinfo {author} {\bibfnamefont
  {M.}~\bibnamefont {Giustina}}, \bibinfo {author} {\bibfnamefont
  {R.}~\bibnamefont {Graff}}, \bibinfo {author} {\bibfnamefont
  {K.}~\bibnamefont {Guerin}}, \bibinfo {author} {\bibfnamefont
  {S.}~\bibnamefont {Habegger}}, \bibinfo {author} {\bibfnamefont {M.~P.}\
  \bibnamefont {Harrigan}}, \bibinfo {author} {\bibfnamefont {M.~J.}\
  \bibnamefont {Hartmann}}, \bibinfo {author} {\bibfnamefont {A.}~\bibnamefont
  {Ho}}, \bibinfo {author} {\bibfnamefont {M.}~\bibnamefont {Hoffmann}},
  \bibinfo {author} {\bibfnamefont {T.}~\bibnamefont {Huang}}, \bibinfo
  {author} {\bibfnamefont {T.~S.}\ \bibnamefont {Humble}}, \bibinfo {author}
  {\bibfnamefont {S.~V.}\ \bibnamefont {Isakov}}, \bibinfo {author}
  {\bibfnamefont {E.}~\bibnamefont {Jeffrey}}, \bibinfo {author} {\bibfnamefont
  {Z.}~\bibnamefont {Jiang}}, \bibinfo {author} {\bibfnamefont
  {D.}~\bibnamefont {Kafri}}, \bibinfo {author} {\bibfnamefont
  {K.}~\bibnamefont {Kechedzhi}}, \bibinfo {author} {\bibfnamefont
  {J.}~\bibnamefont {Kelly}}, \bibinfo {author} {\bibfnamefont {P.~V.}\
  \bibnamefont {Klimov}}, \bibinfo {author} {\bibfnamefont {S.}~\bibnamefont
  {Knysh}}, \bibinfo {author} {\bibfnamefont {A.}~\bibnamefont {Korotkov}},
  \bibinfo {author} {\bibfnamefont {F.}~\bibnamefont {Kostritsa}}, \bibinfo
  {author} {\bibfnamefont {D.}~\bibnamefont {Landhuis}}, \bibinfo {author}
  {\bibfnamefont {M.}~\bibnamefont {Lindmark}}, \bibinfo {author}
  {\bibfnamefont {E.}~\bibnamefont {Lucero}}, \bibinfo {author} {\bibfnamefont
  {D.}~\bibnamefont {Lyakh}}, \bibinfo {author} {\bibfnamefont
  {S.}~\bibnamefont {Mandr{\`a}}}, \bibinfo {author} {\bibfnamefont {J.~R.}\
  \bibnamefont {McClean}}, \bibinfo {author} {\bibfnamefont {M.}~\bibnamefont
  {McEwen}}, \bibinfo {author} {\bibfnamefont {A.}~\bibnamefont {Megrant}},
  \bibinfo {author} {\bibfnamefont {X.}~\bibnamefont {Mi}}, \bibinfo {author}
  {\bibfnamefont {K.}~\bibnamefont {Michielsen}}, \bibinfo {author}
  {\bibfnamefont {M.}~\bibnamefont {Mohseni}}, \bibinfo {author} {\bibfnamefont
  {J.}~\bibnamefont {Mutus}}, \bibinfo {author} {\bibfnamefont
  {O.}~\bibnamefont {Naaman}}, \bibinfo {author} {\bibfnamefont
  {M.}~\bibnamefont {Neeley}}, \bibinfo {author} {\bibfnamefont
  {C.}~\bibnamefont {Neill}}, \bibinfo {author} {\bibfnamefont {M.~Y.}\
  \bibnamefont {Niu}}, \bibinfo {author} {\bibfnamefont {E.}~\bibnamefont
  {Ostby}}, \bibinfo {author} {\bibfnamefont {A.}~\bibnamefont {Petukhov}},
  \bibinfo {author} {\bibfnamefont {J.~C.}\ \bibnamefont {Platt}}, \bibinfo
  {author} {\bibfnamefont {C.}~\bibnamefont {Quintana}}, \bibinfo {author}
  {\bibfnamefont {E.~G.}\ \bibnamefont {Rieffel}}, \bibinfo {author}
  {\bibfnamefont {P.}~\bibnamefont {Roushan}}, \bibinfo {author} {\bibfnamefont
  {N.~C.}\ \bibnamefont {Rubin}}, \bibinfo {author} {\bibfnamefont
  {D.}~\bibnamefont {Sank}}, \bibinfo {author} {\bibfnamefont {K.~J.}\
  \bibnamefont {Satzinger}}, \bibinfo {author} {\bibfnamefont {V.}~\bibnamefont
  {Smelyanskiy}}, \bibinfo {author} {\bibfnamefont {K.~J.}\ \bibnamefont
  {Sung}}, \bibinfo {author} {\bibfnamefont {M.~D.}\ \bibnamefont
  {Trevithick}}, \bibinfo {author} {\bibfnamefont {A.}~\bibnamefont
  {Vainsencher}}, \bibinfo {author} {\bibfnamefont {B.}~\bibnamefont
  {Villalonga}}, \bibinfo {author} {\bibfnamefont {T.}~\bibnamefont {White}},
  \bibinfo {author} {\bibfnamefont {Z.~J.}\ \bibnamefont {Yao}}, \bibinfo
  {author} {\bibfnamefont {P.}~\bibnamefont {Yeh}}, \bibinfo {author}
  {\bibfnamefont {A.}~\bibnamefont {Zalcman}}, \bibinfo {author} {\bibfnamefont
  {H.}~\bibnamefont {Neven}},\ and\ \bibinfo {author} {\bibfnamefont {J.~M.}\
  \bibnamefont {Martinis}},\ }\bibfield  {title} {\bibinfo {title} {Quantum
  supremacy using a programmable superconducting processor},\ }\href
  {https://doi.org/10.1038/s41586-019-1666-5} {\bibfield  {journal} {\bibinfo
  {journal} {Nature}\ }\textbf {\bibinfo {volume} {574}},\ \bibinfo {pages}
  {505} (\bibinfo {year} {2019})}\BibitemShut {NoStop}%
\bibitem [{\citenamefont {Weissman}\ \emph {et~al.}(2003)\citenamefont
  {Weissman}, \citenamefont {Ordentlich}, \citenamefont {Seroussi},
  \citenamefont {Verdu},\ and\ \citenamefont {Weinberger}}]{Weissman2003}%
  \BibitemOpen
  \bibfield  {author} {\bibinfo {author} {\bibfnamefont {T.}~\bibnamefont
  {Weissman}}, \bibinfo {author} {\bibfnamefont {E.}~\bibnamefont
  {Ordentlich}}, \bibinfo {author} {\bibfnamefont {G.}~\bibnamefont
  {Seroussi}}, \bibinfo {author} {\bibfnamefont {S.}~\bibnamefont {Verdu}},\
  and\ \bibinfo {author} {\bibfnamefont {M.~J.}\ \bibnamefont {Weinberger}},\
  }\bibfield  {title} {\bibinfo {title} {Inequalities for the $l_1$ deviation
  of the empirical distribution},\ }\href@noop {} {\bibfield  {journal}
  {\bibinfo  {journal} {HP Technical Reports}\ }\textbf {\bibinfo {volume}
  {HPL-2003-97}},\ \bibinfo {pages} {R1} (\bibinfo {year} {2003})}\BibitemShut
  {NoStop}%
\bibitem [{Note1()}]{Note1}%
  \BibitemOpen
  \bibinfo {note} {Note that this is not related to the number of possible
  output quantum states, which scales with the number of parameters governing
  the quantum evolution, e.g.\ parameters of a simulated Hamiltonian.
  Obviously, no quantum advantage can be obtained for $M$-qubit Hamiltonians
  that have $\protect \mathcal O(2^M)$ parameters but all classically
  intractable $M$-qubit Hamiltonians of physical interest are local and have
  parameter numbers polynomial in $M$~\cite {Lloyd1996}, which validates
  Feynman's proposed advantage for quantum simulation~\cite {Feynman1982}. An
  $M\times M$ optical interferometer has $M^2$ parameters, for example.
  However, even though any useful quantum computer will explore but a $\protect
  \mathcal O(M^k)$-dimensional region of an $\protect \mathcal
  O(2^M)$-dimensional Hilbert space, the number of measurement outcomes will
  still scale like $\protect \mathcal O(2^M)$ a priori, simply because we do
  not know the adequate measurement basis that best contains the $\protect
  \mathcal O(M^k)$ output states. This is the well known exponential overhead
  of quantum state tomography.}\BibitemShut {Stop}%
\bibitem [{\citenamefont {Lita}\ \emph {et~al.}(2008)\citenamefont {Lita},
  \citenamefont {Miller},\ and\ \citenamefont {Nam}}]{Lita2008}%
  \BibitemOpen
  \bibfield  {author} {\bibinfo {author} {\bibfnamefont {A.~E.}\ \bibnamefont
  {Lita}}, \bibinfo {author} {\bibfnamefont {A.~J.}\ \bibnamefont {Miller}},\
  and\ \bibinfo {author} {\bibfnamefont {S.~W.}\ \bibnamefont {Nam}},\
  }\bibfield  {title} {\bibinfo {title} {Counting near-infrared single-photons
  with 95\% efficiency},\ }\href@noop {} {\bibfield  {journal} {\bibinfo
  {journal} {Opt. Expr.}\ }\textbf {\bibinfo {volume} {16}},\ \bibinfo {pages}
  {3032} (\bibinfo {year} {2008})}\BibitemShut {NoStop}%
\bibitem [{\citenamefont {Cahall}\ \emph {et~al.}(2017)\citenamefont {Cahall},
  \citenamefont {Nicolich}, \citenamefont {Islam}, \citenamefont {Lafyatis},
  \citenamefont {Miller}, \citenamefont {Gauthier},\ and\ \citenamefont
  {Kim}}]{Cahall2017}%
  \BibitemOpen
  \bibfield  {author} {\bibinfo {author} {\bibfnamefont {C.}~\bibnamefont
  {Cahall}}, \bibinfo {author} {\bibfnamefont {K.~L.}\ \bibnamefont
  {Nicolich}}, \bibinfo {author} {\bibfnamefont {N.~T.}\ \bibnamefont {Islam}},
  \bibinfo {author} {\bibfnamefont {G.~P.}\ \bibnamefont {Lafyatis}}, \bibinfo
  {author} {\bibfnamefont {A.~J.}\ \bibnamefont {Miller}}, \bibinfo {author}
  {\bibfnamefont {D.~J.}\ \bibnamefont {Gauthier}},\ and\ \bibinfo {author}
  {\bibfnamefont {J.}~\bibnamefont {Kim}},\ }\bibfield  {title} {\bibinfo
  {title} {Multi-photon detection using a conventional superconducting nanowire
  single-photon detector},\ }\href {https://doi.org/10.1364/OPTICA.4.001534}
  {\bibfield  {journal} {\bibinfo  {journal} {Optica}\ }\textbf {\bibinfo
  {volume} {4}},\ \bibinfo {pages} {1534} (\bibinfo {year} {2017})}\BibitemShut
  {NoStop}%
\bibitem [{\citenamefont {Eaton}\ \emph {et~al.}(2022)\citenamefont {Eaton},
  \citenamefont {Hossameldin}, \citenamefont {Birrittella}, \citenamefont
  {Alsing}, \citenamefont {Gerry}, \citenamefont {Dong}, \citenamefont
  {Cuevas},\ and\ \citenamefont {Pfister}}]{Eaton2022_100}%
  \BibitemOpen
  \bibfield  {author} {\bibinfo {author} {\bibfnamefont {M.}~\bibnamefont
  {Eaton}}, \bibinfo {author} {\bibfnamefont {A.}~\bibnamefont {Hossameldin}},
  \bibinfo {author} {\bibfnamefont {R.~J.}\ \bibnamefont {Birrittella}},
  \bibinfo {author} {\bibfnamefont {P.~M.}\ \bibnamefont {Alsing}}, \bibinfo
  {author} {\bibfnamefont {C.~C.}\ \bibnamefont {Gerry}}, \bibinfo {author}
  {\bibfnamefont {H.}~\bibnamefont {Dong}}, \bibinfo {author} {\bibfnamefont
  {C.}~\bibnamefont {Cuevas}},\ and\ \bibinfo {author} {\bibfnamefont
  {O.}~\bibnamefont {Pfister}},\ }\bibfield  {title} {\bibinfo {title}
  {Resolution of 100 photons and quantum generation of unbiased random
  numbers},\ }\bibfield  {journal} {\bibinfo  {journal} {Nature Photonics}\
  }\href {https://doi.org/10.1038/s41566-022-01105-9}
  {10.1038/s41566-022-01105-9} (\bibinfo {year} {2022})\BibitemShut {NoStop}%
\bibitem [{\citenamefont {Cheng}\ \emph {et~al.}(2022)\citenamefont {Cheng},
  \citenamefont {Zhou}, \citenamefont {Wang}, \citenamefont {Shen},
  \citenamefont {Taher},\ and\ \citenamefont {Tang}}]{Cheng2022_100}%
  \BibitemOpen
  \bibfield  {author} {\bibinfo {author} {\bibfnamefont {R.}~\bibnamefont
  {Cheng}}, \bibinfo {author} {\bibfnamefont {Y.}~\bibnamefont {Zhou}},
  \bibinfo {author} {\bibfnamefont {S.}~\bibnamefont {Wang}}, \bibinfo {author}
  {\bibfnamefont {M.}~\bibnamefont {Shen}}, \bibinfo {author} {\bibfnamefont
  {T.}~\bibnamefont {Taher}},\ and\ \bibinfo {author} {\bibfnamefont {H.~X.}\
  \bibnamefont {Tang}},\ }\bibfield  {title} {\bibinfo {title} {A 100-pixel
  photon-number-resolving detector unveiling photon statistics},\ }\bibfield
  {journal} {\bibinfo  {journal} {Nature Photonics}\ }\href
  {https://doi.org/10.1038/s41566-022-01119-3} {10.1038/s41566-022-01119-3}
  (\bibinfo {year} {2022})\BibitemShut {NoStop}%
\bibitem [{\citenamefont {Weedbrook}\ \emph {et~al.}(2012)\citenamefont
  {Weedbrook}, \citenamefont {Pirandola}, \citenamefont
  {Garc{\'\i}a-Patr{\'o}n}, \citenamefont {Cerf}, \citenamefont {Ralph},
  \citenamefont {Shapiro},\ and\ \citenamefont
  {Lloyd}}]{weedbrook2012gaussian}%
  \BibitemOpen
  \bibfield  {author} {\bibinfo {author} {\bibfnamefont {C.}~\bibnamefont
  {Weedbrook}}, \bibinfo {author} {\bibfnamefont {S.}~\bibnamefont
  {Pirandola}}, \bibinfo {author} {\bibfnamefont {R.}~\bibnamefont
  {Garc{\'\i}a-Patr{\'o}n}}, \bibinfo {author} {\bibfnamefont {N.~J.}\
  \bibnamefont {Cerf}}, \bibinfo {author} {\bibfnamefont {T.~C.}\ \bibnamefont
  {Ralph}}, \bibinfo {author} {\bibfnamefont {J.~H.}\ \bibnamefont {Shapiro}},\
  and\ \bibinfo {author} {\bibfnamefont {S.}~\bibnamefont {Lloyd}},\ }\bibfield
   {title} {\bibinfo {title} {Gaussian quantum information},\ }\href@noop {}
  {\bibfield  {journal} {\bibinfo  {journal} {Reviews of Modern Physics}\
  }\textbf {\bibinfo {volume} {84}},\ \bibinfo {pages} {621} (\bibinfo {year}
  {2012})}\BibitemShut {NoStop}%
\bibitem [{\citenamefont {Br{\'a}dler}\ \emph {et~al.}(2018)\citenamefont
  {Br{\'a}dler}, \citenamefont {Dallaire-Demers}, \citenamefont {Rebentrost},
  \citenamefont {Su},\ and\ \citenamefont {Weedbrook}}]{bradler2018gaussian}%
  \BibitemOpen
  \bibfield  {author} {\bibinfo {author} {\bibfnamefont {K.}~\bibnamefont
  {Br{\'a}dler}}, \bibinfo {author} {\bibfnamefont {P.-L.}\ \bibnamefont
  {Dallaire-Demers}}, \bibinfo {author} {\bibfnamefont {P.}~\bibnamefont
  {Rebentrost}}, \bibinfo {author} {\bibfnamefont {D.}~\bibnamefont {Su}},\
  and\ \bibinfo {author} {\bibfnamefont {C.}~\bibnamefont {Weedbrook}},\
  }\bibfield  {title} {\bibinfo {title} {Gaussian boson sampling for perfect
  matchings of arbitrary graphs},\ }\href@noop {} {\bibfield  {journal}
  {\bibinfo  {journal} {Physical Review A}\ }\textbf {\bibinfo {volume} {98}},\
  \bibinfo {pages} {032310} (\bibinfo {year} {2018})}\BibitemShut {NoStop}%
\bibitem [{\citenamefont {Br{\'a}dler}\ \emph {et~al.}(2021)\citenamefont
  {Br{\'a}dler}, \citenamefont {Friedland}, \citenamefont {Izaac},
  \citenamefont {Killoran},\ and\ \citenamefont {Su}}]{bradler2021graph}%
  \BibitemOpen
  \bibfield  {author} {\bibinfo {author} {\bibfnamefont {K.}~\bibnamefont
  {Br{\'a}dler}}, \bibinfo {author} {\bibfnamefont {S.}~\bibnamefont
  {Friedland}}, \bibinfo {author} {\bibfnamefont {J.}~\bibnamefont {Izaac}},
  \bibinfo {author} {\bibfnamefont {N.}~\bibnamefont {Killoran}},\ and\
  \bibinfo {author} {\bibfnamefont {D.}~\bibnamefont {Su}},\ }\bibfield
  {title} {\bibinfo {title} {Graph isomorphism and gaussian boson sampling},\
  }\href@noop {} {\bibfield  {journal} {\bibinfo  {journal} {Special Matrices}\
  }\textbf {\bibinfo {volume} {9}},\ \bibinfo {pages} {166} (\bibinfo {year}
  {2021})}\BibitemShut {NoStop}%
\bibitem [{\citenamefont {Arrazola}\ and\ \citenamefont
  {Bromley}(2018)}]{arrazola2018using}%
  \BibitemOpen
  \bibfield  {author} {\bibinfo {author} {\bibfnamefont {J.~M.}\ \bibnamefont
  {Arrazola}}\ and\ \bibinfo {author} {\bibfnamefont {T.~R.}\ \bibnamefont
  {Bromley}},\ }\bibfield  {title} {\bibinfo {title} {Using gaussian boson
  sampling to find dense subgraphs},\ }\href@noop {} {\bibfield  {journal}
  {\bibinfo  {journal} {Physical review letters}\ }\textbf {\bibinfo {volume}
  {121}},\ \bibinfo {pages} {030503} (\bibinfo {year} {2018})}\BibitemShut
  {NoStop}%
\bibitem [{\citenamefont {Canonne}(2020)}]{canonne2020short}%
  \BibitemOpen
  \bibfield  {author} {\bibinfo {author} {\bibfnamefont {C.~L.}\ \bibnamefont
  {Canonne}},\ }\bibfield  {title} {\bibinfo {title} {A short note on learning
  discrete distributions},\ }\href@noop {} {\bibfield  {journal} {\bibinfo
  {journal} {arXiv preprint arXiv:2002.11457}\ } (\bibinfo {year}
  {2020})}\BibitemShut {NoStop}%
\bibitem [{\citenamefont {Oru{\c{c}}}(2016)}]{orucc2016number}%
  \BibitemOpen
  \bibfield  {author} {\bibinfo {author} {\bibfnamefont {A.~Y.}\ \bibnamefont
  {Oru{\c{c}}}},\ }\bibfield  {title} {\bibinfo {title} {On number of
  partitions of an integer into a fixed number of positive integers},\
  }\href@noop {} {\bibfield  {journal} {\bibinfo  {journal} {Journal of Number
  Theory}\ }\textbf {\bibinfo {volume} {159}},\ \bibinfo {pages} {355}
  (\bibinfo {year} {2016})}\BibitemShut {NoStop}%
\bibitem [{\citenamefont {Bradler}\ \emph {et~al.}(2019)\citenamefont
  {Bradler}, \citenamefont {Israel}, \citenamefont {Schuld},\ and\
  \citenamefont {Su}}]{bradler2019duality}%
  \BibitemOpen
  \bibfield  {author} {\bibinfo {author} {\bibfnamefont {K.}~\bibnamefont
  {Bradler}}, \bibinfo {author} {\bibfnamefont {R.}~\bibnamefont {Israel}},
  \bibinfo {author} {\bibfnamefont {M.}~\bibnamefont {Schuld}},\ and\ \bibinfo
  {author} {\bibfnamefont {D.}~\bibnamefont {Su}},\ }\bibfield  {title}
  {\bibinfo {title} {A duality at the heart of gaussian boson sampling},\
  }\href@noop {} {\bibfield  {journal} {\bibinfo  {journal} {arXiv preprint
  arXiv:1910.04022}\ } (\bibinfo {year} {2019})}\BibitemShut {NoStop}%
\bibitem [{\citenamefont {Quesada}\ \emph {et~al.}(2018)\citenamefont
  {Quesada}, \citenamefont {Arrazola},\ and\ \citenamefont
  {Killoran}}]{quesada2018gaussian}%
  \BibitemOpen
  \bibfield  {author} {\bibinfo {author} {\bibfnamefont {N.}~\bibnamefont
  {Quesada}}, \bibinfo {author} {\bibfnamefont {J.~M.}\ \bibnamefont
  {Arrazola}},\ and\ \bibinfo {author} {\bibfnamefont {N.}~\bibnamefont
  {Killoran}},\ }\bibfield  {title} {\bibinfo {title} {Gaussian boson sampling
  using threshold detectors},\ }\href@noop {} {\bibfield  {journal} {\bibinfo
  {journal} {Physical Review A}\ }\textbf {\bibinfo {volume} {98}},\ \bibinfo
  {pages} {062322} (\bibinfo {year} {2018})}\BibitemShut {NoStop}%
\bibitem [{Note2()}]{Note2}%
  \BibitemOpen
  \bibinfo {note} {The Torontonian was also shown to be a generating function
  for the Hafnian given by \begin {equation} \protect \textrm {Haf}(A) =
  \protect \frac {1}{M!}\protect \frac {d^M}{dz^M} \evalat {\protect \textrm
  {Tor}(zXA)}{z=0} \end {equation} where $A$ is a $2M\times 2M$
  matrix.}\BibitemShut {Stop}%
\bibitem [{\citenamefont {Hahn}(2006)}]{hahn2006routines}%
  \BibitemOpen
  \bibfield  {author} {\bibinfo {author} {\bibfnamefont {T.}~\bibnamefont
  {Hahn}},\ }\bibfield  {title} {\bibinfo {title} {Routines for the
  diagonalization of complex matrices},\ }\href@noop {} {\bibfield  {journal}
  {\bibinfo  {journal} {arXiv preprint physics/0607103}\ } (\bibinfo {year}
  {2006})}\BibitemShut {NoStop}%
\bibitem [{\citenamefont {Gu}\ \emph {et~al.}(2019)\citenamefont {Gu},
  \citenamefont {Wang},\ and\ \citenamefont {Zhang}}]{gu2019quantum}%
  \BibitemOpen
  \bibfield  {author} {\bibinfo {author} {\bibfnamefont {L.}~\bibnamefont
  {Gu}}, \bibinfo {author} {\bibfnamefont {X.}~\bibnamefont {Wang}},\ and\
  \bibinfo {author} {\bibfnamefont {G.}~\bibnamefont {Zhang}},\ }\bibfield
  {title} {\bibinfo {title} {Quantum higher order singular value
  decomposition}\ }(\bibinfo {organization} {IEEE},\ \bibinfo {year} {2019})\
  pp.\ \bibinfo {pages} {1166--1171}\BibitemShut {NoStop}%
\bibitem [{\citenamefont {Kerenidis}\ and\ \citenamefont
  {Prakash}(2016)}]{kerenidis2016quantum}%
  \BibitemOpen
  \bibfield  {author} {\bibinfo {author} {\bibfnamefont {I.}~\bibnamefont
  {Kerenidis}}\ and\ \bibinfo {author} {\bibfnamefont {A.}~\bibnamefont
  {Prakash}},\ }\bibfield  {title} {\bibinfo {title} {Quantum recommendation
  systems},\ }\href@noop {} {\bibfield  {journal} {\bibinfo  {journal} {arXiv
  preprint arXiv:1603.08675}\ } (\bibinfo {year} {2016})}\BibitemShut {NoStop}%
\bibitem [{\citenamefont {Gupt}\ \emph {et~al.}(2019)\citenamefont {Gupt},
  \citenamefont {Izaac},\ and\ \citenamefont {Quesada}}]{gupt2019walrus}%
  \BibitemOpen
  \bibfield  {author} {\bibinfo {author} {\bibfnamefont {B.}~\bibnamefont
  {Gupt}}, \bibinfo {author} {\bibfnamefont {J.}~\bibnamefont {Izaac}},\ and\
  \bibinfo {author} {\bibfnamefont {N.}~\bibnamefont {Quesada}},\ }\bibfield
  {title} {\bibinfo {title} {The walrus: a library for the calculation of
  hafnians, hermite polynomials and gaussian boson sampling},\ }\href@noop {}
  {\bibfield  {journal} {\bibinfo  {journal} {Journal of Open Source Software}\
  }\textbf {\bibinfo {volume} {4}},\ \bibinfo {pages} {1705} (\bibinfo {year}
  {2019})}\BibitemShut {NoStop}%
\bibitem [{\citenamefont {Siglidis}\ \emph {et~al.}(2020)\citenamefont
  {Siglidis}, \citenamefont {Nikolentzos}, \citenamefont {Limnios},
  \citenamefont {Giatsidis}, \citenamefont {Skianis},\ and\ \citenamefont
  {Vazirgiannis}}]{siglidis2020grakel}%
  \BibitemOpen
  \bibfield  {author} {\bibinfo {author} {\bibfnamefont {G.}~\bibnamefont
  {Siglidis}}, \bibinfo {author} {\bibfnamefont {G.}~\bibnamefont
  {Nikolentzos}}, \bibinfo {author} {\bibfnamefont {S.}~\bibnamefont
  {Limnios}}, \bibinfo {author} {\bibfnamefont {C.}~\bibnamefont {Giatsidis}},
  \bibinfo {author} {\bibfnamefont {K.}~\bibnamefont {Skianis}},\ and\ \bibinfo
  {author} {\bibfnamefont {M.}~\bibnamefont {Vazirgiannis}},\ }\bibfield
  {title} {\bibinfo {title} {Grakel: A graph kernel library in python.},\
  }\href@noop {} {\bibfield  {journal} {\bibinfo  {journal} {J. Mach. Learn.
  Res.}\ }\textbf {\bibinfo {volume} {21}},\ \bibinfo {pages} {1} (\bibinfo
  {year} {2020})}\BibitemShut {NoStop}%
\bibitem [{\citenamefont {Kriege}\ and\ \citenamefont
  {Mutzel}(2012)}]{kriege2012subgraph}%
  \BibitemOpen
  \bibfield  {author} {\bibinfo {author} {\bibfnamefont {N.}~\bibnamefont
  {Kriege}}\ and\ \bibinfo {author} {\bibfnamefont {P.}~\bibnamefont
  {Mutzel}},\ }\bibfield  {title} {\bibinfo {title} {Subgraph matching kernels
  for attributed graphs},\ }\href@noop {} {\bibfield  {journal} {\bibinfo
  {journal} {arXiv preprint arXiv:1206.6483}\ } (\bibinfo {year}
  {2012})}\BibitemShut {NoStop}%
\bibitem [{\citenamefont {Shervashidze}\ \emph {et~al.}(2009)\citenamefont
  {Shervashidze}, \citenamefont {Vishwanathan}, \citenamefont {Petri},
  \citenamefont {Mehlhorn},\ and\ \citenamefont
  {Borgwardt}}]{shervashidze2009efficient}%
  \BibitemOpen
  \bibfield  {author} {\bibinfo {author} {\bibfnamefont {N.}~\bibnamefont
  {Shervashidze}}, \bibinfo {author} {\bibfnamefont {S.}~\bibnamefont
  {Vishwanathan}}, \bibinfo {author} {\bibfnamefont {T.}~\bibnamefont {Petri}},
  \bibinfo {author} {\bibfnamefont {K.}~\bibnamefont {Mehlhorn}},\ and\
  \bibinfo {author} {\bibfnamefont {K.}~\bibnamefont {Borgwardt}},\ }\bibfield
  {title} {\bibinfo {title} {Efficient graphlet kernels for large graph
  comparison},\ }in\ \href@noop {} {\emph {\bibinfo {booktitle} {Artificial
  intelligence and statistics}}}\ (\bibinfo {organization} {PMLR},\ \bibinfo
  {year} {2009})\ pp.\ \bibinfo {pages} {488--495}\BibitemShut {NoStop}%
\bibitem [{\citenamefont {Vishwanathan}\ \emph {et~al.}(2010)\citenamefont
  {Vishwanathan}, \citenamefont {Schraudolph}, \citenamefont {Kondor},\ and\
  \citenamefont {Borgwardt}}]{vishwanathan2010graph}%
  \BibitemOpen
  \bibfield  {author} {\bibinfo {author} {\bibfnamefont {S.~V.~N.}\
  \bibnamefont {Vishwanathan}}, \bibinfo {author} {\bibfnamefont {N.~N.}\
  \bibnamefont {Schraudolph}}, \bibinfo {author} {\bibfnamefont
  {R.}~\bibnamefont {Kondor}},\ and\ \bibinfo {author} {\bibfnamefont {K.~M.}\
  \bibnamefont {Borgwardt}},\ }\bibfield  {title} {\bibinfo {title} {Graph
  kernels},\ }\href@noop {} {\bibfield  {journal} {\bibinfo  {journal} {Journal
  of Machine Learning Research}\ }\textbf {\bibinfo {volume} {11}},\ \bibinfo
  {pages} {1201} (\bibinfo {year} {2010})}\BibitemShut {NoStop}%
\bibitem [{\citenamefont {Borgwardt}\ and\ \citenamefont
  {Kriegel}(2005)}]{borgwardt2005shortest}%
  \BibitemOpen
  \bibfield  {author} {\bibinfo {author} {\bibfnamefont {K.~M.}\ \bibnamefont
  {Borgwardt}}\ and\ \bibinfo {author} {\bibfnamefont {H.-P.}\ \bibnamefont
  {Kriegel}},\ }\bibfield  {title} {\bibinfo {title} {Shortest-path kernels on
  graphs},\ }in\ \href@noop {} {\emph {\bibinfo {booktitle} {Fifth IEEE
  international conference on data mining (ICDM'05)}}}\ (\bibinfo
  {organization} {IEEE},\ \bibinfo {year} {2005})\ pp.\ \bibinfo {pages}
  {8--pp}\BibitemShut {NoStop}%
\bibitem [{\citenamefont {Barabási}\ and\ \citenamefont
  {Pósfai}(2016)}]{barabasi2016}%
  \BibitemOpen
  \bibfield  {author} {\bibinfo {author} {\bibfnamefont {A.-L.}\ \bibnamefont
  {Barabási}}\ and\ \bibinfo {author} {\bibfnamefont {M.}~\bibnamefont
  {Pósfai}},\ }\href@noop {} {\emph {\bibinfo {title} {Network Science}}}\
  (\bibinfo  {publisher} {Cambridge University Press, Cambridge},\ \bibinfo
  {year} {2016})\BibitemShut {NoStop}%
\bibitem [{Note3()}]{Note3}%
  \BibitemOpen
  \bibinfo {note} {Although this has been proven rigorously for the exact
  sampling case \cite {Aaronson2010, kruse2019detailed} the proof pertaining to
  the approximate sampling case rests on the assumption of two conjectures
  known as the Permanent-of-Gaussians Conjecture and the Permanent
  Anti-Concentration Conjecture which are as of now still
  unproven.}\BibitemShut {Stop}%
\bibitem [{\citenamefont {Quesada}(2019)}]{quesada2019franck}%
  \BibitemOpen
  \bibfield  {author} {\bibinfo {author} {\bibfnamefont {N.}~\bibnamefont
  {Quesada}},\ }\bibfield  {title} {\bibinfo {title} {Franck-condon factors by
  counting perfect matchings of graphs with loops},\ }\href@noop {} {\bibfield
  {journal} {\bibinfo  {journal} {The Journal of chemical physics}\ }\textbf
  {\bibinfo {volume} {150}},\ \bibinfo {pages} {164113} (\bibinfo {year}
  {2019})}\BibitemShut {NoStop}%
\bibitem [{\citenamefont {Bulmer}\ \emph {et~al.}(2022)\citenamefont {Bulmer},
  \citenamefont {Paesani}, \citenamefont {Chadwick},\ and\ \citenamefont
  {Quesada}}]{bulmer2022threshold}%
  \BibitemOpen
  \bibfield  {author} {\bibinfo {author} {\bibfnamefont {J.~F.}\ \bibnamefont
  {Bulmer}}, \bibinfo {author} {\bibfnamefont {S.}~\bibnamefont {Paesani}},
  \bibinfo {author} {\bibfnamefont {R.~S.}\ \bibnamefont {Chadwick}},\ and\
  \bibinfo {author} {\bibfnamefont {N.}~\bibnamefont {Quesada}},\ }\bibfield
  {title} {\bibinfo {title} {Threshold detection statistics of bosonic
  states},\ }\href@noop {} {\bibfield  {journal} {\bibinfo  {journal} {arXiv
  preprint arXiv:2202.04600}\ } (\bibinfo {year} {2022})}\BibitemShut {NoStop}%
\bibitem [{\citenamefont {Clements}\ \emph {et~al.}(2016)\citenamefont
  {Clements}, \citenamefont {Humphreys}, \citenamefont {Metcalf}, \citenamefont
  {Kolthammer},\ and\ \citenamefont {Walmsley}}]{clements2016optimal}%
  \BibitemOpen
  \bibfield  {author} {\bibinfo {author} {\bibfnamefont {W.~R.}\ \bibnamefont
  {Clements}}, \bibinfo {author} {\bibfnamefont {P.~C.}\ \bibnamefont
  {Humphreys}}, \bibinfo {author} {\bibfnamefont {B.~J.}\ \bibnamefont
  {Metcalf}}, \bibinfo {author} {\bibfnamefont {W.~S.}\ \bibnamefont
  {Kolthammer}},\ and\ \bibinfo {author} {\bibfnamefont {I.~A.}\ \bibnamefont
  {Walmsley}},\ }\bibfield  {title} {\bibinfo {title} {Optimal design for
  universal multiport interferometers},\ }\href@noop {} {\bibfield  {journal}
  {\bibinfo  {journal} {Optica}\ }\textbf {\bibinfo {volume} {3}},\ \bibinfo
  {pages} {1460} (\bibinfo {year} {2016})}\BibitemShut {NoStop}%
\bibitem [{\citenamefont {Lloyd}(1996)}]{Lloyd1996}%
  \BibitemOpen
  \bibfield  {author} {\bibinfo {author} {\bibfnamefont {S.}~\bibnamefont
  {Lloyd}},\ }\bibfield  {title} {\bibinfo {title} {Universal quantum
  simulators},\ }\href@noop {} {\bibfield  {journal} {\bibinfo  {journal}
  {Science}\ }\textbf {\bibinfo {volume} {273}},\ \bibinfo {pages} {1073}
  (\bibinfo {year} {1996})}\BibitemShut {NoStop}%
\bibitem [{\citenamefont {Feynman}(1982)}]{Feynman1982}%
  \BibitemOpen
  \bibfield  {author} {\bibinfo {author} {\bibfnamefont {R.~P.}\ \bibnamefont
  {Feynman}},\ }\bibfield  {title} {\bibinfo {title} {Simulating physics with
  computers},\ }\href@noop {} {\bibfield  {journal} {\bibinfo  {journal} {Int.
  J. Theor. Phys.}\ }\textbf {\bibinfo {volume} {21}},\ \bibinfo {pages} {467}
  (\bibinfo {year} {1982})}\BibitemShut {NoStop}%
\bibitem [{\citenamefont {Kruse}\ \emph {et~al.}(2019)\citenamefont {Kruse},
  \citenamefont {Hamilton}, \citenamefont {Sansoni}, \citenamefont {Barkhofen},
  \citenamefont {Silberhorn},\ and\ \citenamefont {Jex}}]{kruse2019detailed}%
  \BibitemOpen
  \bibfield  {author} {\bibinfo {author} {\bibfnamefont {R.}~\bibnamefont
  {Kruse}}, \bibinfo {author} {\bibfnamefont {C.~S.}\ \bibnamefont {Hamilton}},
  \bibinfo {author} {\bibfnamefont {L.}~\bibnamefont {Sansoni}}, \bibinfo
  {author} {\bibfnamefont {S.}~\bibnamefont {Barkhofen}}, \bibinfo {author}
  {\bibfnamefont {C.}~\bibnamefont {Silberhorn}},\ and\ \bibinfo {author}
  {\bibfnamefont {I.}~\bibnamefont {Jex}},\ }\bibfield  {title} {\bibinfo
  {title} {Detailed study of gaussian boson sampling},\ }\href@noop {}
  {\bibfield  {journal} {\bibinfo  {journal} {Physical Review A}\ }\textbf
  {\bibinfo {volume} {100}},\ \bibinfo {pages} {032326} (\bibinfo {year}
  {2019})}\BibitemShut {NoStop}%
\end{thebibliography}%

\end{document}